\title{Joint-sparse recovery from multiple measurements%
  \thanks{Department of Computer Science, University of British
    Columbia, Vancouver V6T 1Z4, BC, Canada ({\tt
      \{ewout78,mpf\}@cs.ubc.ca}). Research partially supported by the
    Natural Sciences and Engineering Research Council of Canada.}  }
\author{Ewout van den Berg \and Michael P. Friedlander}
\newcommand{\trnumber}{TR-2009-07} \newcommand{\trdate}{April 2009}
\numberwithin{equation}{section}
\newcommand{\iprod}[2]{\Braket{#1,#2}}
\newcommand{\rembo}{ReMBo}
\newcommand{\spark}{\ensuremath{\textrm{Spark}}}   % Spark of a matrix
\newcommand{\support}{\ensuremath{\textrm{Supp}}}  % Support of a vector
\newcommand{\kernel}{\ensuremath{\textrm{Ker}}}    % Kernel of a matrix
\newcommand{\card}[1]{\ensuremath{\abs{#1}}} % Cardinality of set
\newcommand{\col}[1]{\ensuremath{^{\scalebox{.6}{$\downarrow$}{#1}}}} % Column indexing
\newcommand{\row}[1]{\ensuremath{^{{#1}{\scalebox{.6}{$\rightarrow$}}}}} % Row indexing
\newcommand{\polar}{\ensuremath{^{\circ}}}         % Polar of a function
\newtheorem{theorem}{Theorem}[section]
\newtheorem{corollary}[theorem]{Corollary}
\begin{document}

\maketitle

\begin{abstract}
  The joint-sparse recovery problem aims to recover, from sets of
  compressed measurements, unknown sparse matrices with nonzero
  entries restricted to a subset of rows. This is an extension of the
  single-measurement-vector (SMV) problem widely studied in compressed
  sensing. We analyze the recovery properties for two types of
  recovery algorithms. First, we show that recovery using sum-of-norm
  minimization cannot exceed the uniform recovery rate of sequential
  SMV using $\ell_1$ minimization, and that there are problems that
  can be solved with one approach but not with the other. Second, we
  analyze the performance of the \rembo{} algorithm [M.\@ Mishali and
  Y. Eldar, {\it IEEE Trans.\@ Sig.\@ Proc.}, 56 (2008)] in
  combination with $\ell_1$ minimization, and show how recovery
  improves as more measurements are taken. From this analysis it
  follows that having more measurements than number of nonzero rows
  does not improve the potential theoretical recovery rate.
\end{abstract}

\section{Introduction}

A problem of central importance in compressed sensing
\cite{CAN2006a,DON2006c} is the following: given an $m\times n$ matrix
$A$, and a measurement vector $b = Ax_0$, recover $x_0$. When $m < n$,
this problem is ill-posed, and it is not generally possible to
uniquely recover $x_0$ without some prior information. In many
important cases, $x_0$ is known to be sparse, and it may be
appropriate to solve
\begin{equation}\label{Eq:L0}
\minimize{x\in\Real^n}\quad \norm{x}_0\quad \st\quad Ax = b,
\end{equation}
to find the sparsest possible solution. (The $\ell_0$-norm
$\norm{\cdot}_0$ of a vector counts the number of nonzero entries.)
If $x_0$ has fewer than $s/2$ nonzero entries, where $s$ is the number
of nonzeros in the sparsest null-vector of $A$, then $x_0$ is the
unique solution of this optimization
problem~\cite{DON2003Ea,GRI2003Na}. The main obstacle of this approach
is that it is combinatorial~\cite{NAT1995a}, and therefore impractical
for all but the smallest problems. To overcome this, Chen et
al.~\cite{CHE1998DSa} introduced basis pursuit:
\begin{equation}\label{Eq:L1}
\minimize{x\in\Real^n}\quad \norm{x}_1\quad \st\quad Ax = b.
\end{equation}
This convex relaxation, based on the $\ell_1$-norm $\norm{x}_1$, can
be solved much more efficiently; moreover, under certain conditions
\cite{CAN2006RTa,DON2005a}, it yields the same solution as the
$\ell_0$ problem~\eqref{Eq:L0}.

A natural extension of the single-measurement-vector (SMV) problem
just described is the multiple-measurement-vector (MMV)
problem. Instead of a single measurement $b$, we are given a set of
$r$ measurements $$b^{(k)} = Ax_0^{(k)}, \quad k=1,\ldots,r,$$ in
which the vectors $x_0^{(k)}$ are jointly sparse---i.e., have nonzero
entries at the same locations.  Such problems arise in source
localization~\cite{MAL2005CWb}, neuromagnetic
imaging~\cite{COT2005REKa}, and equalization of sparse-communication
channels~\cite{COT2002Ra,FEV1999GFa}.  Succinctly, the aim of the MMV
problem is to recover $X_0$ from observations $B=AX_0$, where
$B=[b^{(1)},\ b^{(2)}, \ldots,\ b^{(r)}]$ is an $m\times r$ matrix,
and the $n\times r$ matrix $X_0$ is row sparse---i.e., it has nonzero
entries in only a small number of rows.  The most widely studied
approach to the MMV problem is based on solving the convex
optimization problem
\begin{equation*}\label{Eq:L0p}
  \minimize{X\in\Real^{n\times r}}\quad\norm{X}_{p,q}\quad\st\quad AX = B,
\end{equation*}
where the mixed $\ell_{p,q}$ norm of $X$ is defined as
\[
\norm{X}_{p,q} = \Big(\sum_{j=1}^n \norm{X\row{j}}_q^p\Big)^{1/p},
\]
and $X\row{j}$ is the (column) vector whose entries form the $j$th row
of $X$. In particular, Cotter et al.~\cite{COT2005REKa} consider
$p=2$, $q\leq 1$; Tropp~\cite{TRO2006b,TRO2006GSa} analyzes $p=1$, $q
= \infty$; Malioutov et al.~\cite{MAL2005CWb} and Eldar and Mishali
\cite{ELD2008Ma} use $p=1$, $q = 2$; and Chen and Huo \cite{CHE2006Ha}
study $p=1$, $q \geq 1$. A different approach is given by Mishali and
Eldar~\cite{MIS2008Ea}, who propose the \rembo\ algorithm, which
reduces MMV to a series of SMV problems.

In this paper we study the sum-of-norms problem and the conditions
for uniform recovery of all $X_0$ with a fixed row support, and
compare this against recovery using $\ell_{1,1}$. We then construct
matrices $X_0$ that cannot be recovered using $\ell_{1,1}$ but for
which $\ell_{1,2}$ does succeed, and vice versa. We then illustrate
the individual recovery properties of $\ell_{1,1}$ and $\ell_{1,2}$
with empirical results. We further show how recovery via $\ell_{1,1}$
changes as the number of measurements increases, and propose a
boosted-$\ell_1$ approach to improve on the $\ell_{1,1}$
approach. This analysis provides the starting point for our study of
the recovery properties of \rembo, based on a geometrical
interpretation of this algorithm.

We begin in Section~\ref{Sec:L1} by summarizing existing
$\ell_0$-$\ell_1$ equivalence results, which give conditions under
which the solution of the $\ell_1$ relaxation~\eqref{Eq:L1} coincides
with the solution of the $\ell_0$ problem~\eqref{Eq:L0}. In
Section~\ref{Sec:SumOfRowNorms} we consider the $\ell_{1,2}$
mixed-norm and sum-of-norms formulations and compare their performance
against $\ell_{1,1}$. In Sections~\ref{Sec:BoostedL1}
and~\ref{Sec:ReMBo} we examine two approaches that are based on
sequential application of~\eqref{Eq:L1}.

\paragraph{Notation.}
We assume throughout that $A$ is a full-rank matrix in $\Real^{m\times
  n}$, and that $X_0$ is an $s$ row-sparse matrix in $\Real^{n\times
  r}$.  We follow the convention that all vectors are column
vectors. For an arbitrary matrix $M$, its $j$th column is denoted by
the column vector $M\col{j}$; its $i$th row is the transpose of the
column vector $M\row{i}$. The $i$th entry of a vector $v$ is denoted
by $v_i$. We make exceptions for $e_i = I\col{i}$ and for $x_0$
(resp., $X_0$), which represents the sparse vector (resp., matrix) we
want to recover. When there is no ambiguity we sometimes write $m_i$
to denote $M\col{i}$. When concatenating vectors into matrices,
$[a,b,c]$ denotes horizontal concatenation and $[a;b;c]$ denotes
vertical concatenation. When indexing with $\Iscr$, we define the
vector $v_{\Iscr} := [v_i]_{i\in\Iscr}$, and the $m\times\card{\Iscr}$
matrix $A_{\Iscr} := [A\col{j}]_{j\in\Iscr}$. Row or column selection
takes precedence over all other operators. % For
% notational convenience we omit transposes when taking norms of row
% vectors so that $\norm{(X^T)\col{i}}$ is the same as
% $\norm{X\row{i}}$.

\section{Existing results for $\ell_1$ recovery}\label{Sec:L1}

The conditions under which \eqref{Eq:L1} gives the sparsest possible
solution have been studied by applying a number of different
techniques. By far the most popular analytical approach is based on
the restricted isometry property, introduced by Cand\`es and Tao
\cite{CAN2005Tb}, which gives sufficient conditions for
equivalence. Donoho \cite{DON2004b} obtains necessary and sufficient
(NS) conditions by analyzing the underlying geometry of
\eqref{Eq:L1}. Several authors \cite{DON2001Ha,GRI2003Na,DON2003Ea}
characterize the NS-conditions in terms of properties of the kernel of
$A$:
\[
\kernel(A) = \{x \mid Ax=0\}.
\]
Fuchs \cite{FUC2004a} and Tropp \cite{TRO2005a} express
sufficient conditions in terms of the solution of the dual of
\eqref{Eq:L1}:
\begin{equation}\label{Eq:L1Dual}
\maximize{y}\quad b\T y\quad\st\quad \norm{A\T y}_{\infty} \leq 1.
\end{equation}
In this paper we are mainly concerned with the geometric and kernel
conditions. We use the geometrical interpretation of the problems
to get a better understanding, and resort to the null-space properties
of $A$ to analyze recovery. To make the discussion more
self-contained, we briefly recall some of the relevant results in the
next three sections.

\subsection{The geometry of $\ell_1$ recovery}

The set of all points of the unit $\ell_1$-ball, $\{x\in\Real^n \mid
\norm{x}_1 \leq 1\}$, can be formed by taking convex combinations of
$\pm e_j$, the signed columns of the identity matrix. Geometrically
this is equivalent to taking the convex hull of these vectors, giving
the cross-polytope $\Cscr = \mathrm{conv}\{\pm e_1,\pm e_2,\ldots,\pm
e_n\}$. Likewise, we can look at the linear mapping $x\mapsto Ax$ for
all points $x\in\Cscr$, giving the polytope $\Pscr = \{ Ax \mid
x\in\Cscr\} = A\Cscr$. The faces of $\Cscr$ can be expressed as the
convex hull of subsets of vertices, not including pairs that are
reflections with respect to the origin (such pairs are sometimes
erroneously referred to as antipodal, which is a slightly more general
concept~\cite{GRU2003a}). Under linear transformations, each face from
the cross-polytope $\Cscr$ either maps to a face on $\Pscr$ or
vanishes into the interior of $\Pscr$.

The solution found by \eqref{Eq:L1} can be interpreted as
follows. Starting with a radius of zero, we slowly ``inflate''
$\mathcal{P}$ until it first touches $b$. The radius at which this
happens corresponds to the $\ell_1$-norm of the solution $x^*$. The
vertices whose convex hull is the face touching $b$ determine the
location and sign of the non-zero entries of $x^*$, while the position
where $b$ touches the face determines their relative
weights. Donoho~\cite{DON2004b} shows that $x_0$ can be recovered from
$b = Ax_0$ using \eqref{Eq:L1} if and only if the face of the (scaled)
cross-polytope containing $x_0$ maps to a face on $\mathcal{P}$. Two
direct consequences are that recovery depends only on the sign pattern
of $x_0$, and that the probability of recovering a random $s$-sparse
vector is equal to the ratio of the number of $(s-1)$-faces in
$\mathcal{P}$ to the number of $(s-1)$-faces in $\Cscr$. That
is, letting $\mathcal{F}_d(\mathcal{P})$ denote the collection of all
$d$-faces \cite{GRU2003a} in $\mathcal{P}$, the probability of
recovering $x_0$ using $\ell_1$ is given by
\[
P_{\ell_1}(A,s) = \frac{\card{\mathcal{F}_{s-1}(A\Cscr)}}
                       {\card{\mathcal{F}_{s-1}(\Cscr)}}.
\]
When we need to find the recoverability of vectors restricted to a
support $\Iscr$, this probability becomes
\begin{equation}\label{Eq:Pl1AI}
P_{\ell_1}(A,\mathcal{I})
 = \frac{\card{\mathcal{F}_{\mathcal{I}}(A\Cscr)}}
        {\card{\mathcal{F}_{\mathcal{I}}(\Cscr)}},
\end{equation}
where $\mathcal{F}_{\Iscr}(\Cscr) =
2^{\card{\Iscr}}$ denotes the number of faces in $\Cscr$
formed by the convex hull of $\{\pm e_j\}_{i\in\mathcal{I}}$, and
$\mathcal{F}_{\mathcal{I}}(A\Cscr)$ is the number of faces on
$A\Cscr$ generated by $\{\pm A\col{j}\}_{j\in\mathcal{I}}$.

\subsection{Null-space properties and $\ell_1$ recovery}

Equivalence results in terms of null-space properties generally
characterize equivalence for the set of all vectors $x$ with a fixed
support, which is defined as
\[
\support(x) = \{j \mid x_j \neq 0\}.
\]
We say that $x$ can be uniformly recovered on $\mathcal{I}\subseteq
\{1,\ldots,n\}$ if all $x$ with $\support(x) \subseteq\mathcal{I}$ can
be recovered. The following theorem illustrates conditions for uniform
recovery via $\ell_1$ on an index set; more general results are given
by Gribonval and Nielsen \cite{GRI2007Na}.

\begin{theorem}[Donoho and Elad \cite{DON2003Ea}, Gribonval and Nielsen
    \cite{GRI2003Na}]\label{Thm:NullSpaceL1}
    Let $A$ be an $m\times n$ matrix and $\mathcal{I} \subseteq
    \{1,\ldots,n\}$ be a fixed index set. Then all $x_0\in\Real^n$
    with $\support(x_0) \subseteq \mathcal{I}$ can be uniquely
    recovered from $b = Ax_0$ using basis pursuit \eqref{Eq:L1} if and
    only if for all $z \in \kernel(A)\setminus \{0\}$,
    \begin{equation}\label{Eq:NullSpaceCondL1}
      \sum_{j\in\mathcal{I}} \abs{z_j} < \sum_{j\not\in\mathcal{I}} \abs{z_j}.
    \end{equation}
    That is, the $\ell_1$-norm of $z$ on $\mathcal{I}$ is strictly less
    than the $\ell_1$-norm of $z$ on the complement $\mathcal{I}^c$.
\end{theorem}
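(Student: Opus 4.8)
The plan is to prove the biconditional by relating recovery of every $x_0$ supported on $\Iscr$ to a uniform separation property of the kernel of $A$. The key observation is that $x_0$ fails to be the unique $\ell_1$-minimizer exactly when there is a nonzero perturbation $z\in\kernel(A)$ that does not increase the $\ell_1$-norm; the condition \eqref{Eq:NullSpaceCondL1} is precisely the statement that no such perturbation exists for any $x_0$ supported on $\Iscr$.

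For the ``if'' direction, suppose \eqref{Eq:NullSpaceCondL1} holds and let $x_0$ have $\support(x_0)\subseteq\Iscr$. I would take any feasible point $x_0+z$ with $z\in\kernel(A)\setminus\{0\}$ (so that $A(x_0+z)=b$) and show $\norm{x_0+z}_1 > \norm{x_0}_1$. Splitting the sum over $\Iscr$ and $\Iscr^c$ and using that $x_0$ vanishes off $\Iscr$,
\begin{equation*}
\norm{x_0+z}_1
= \sum_{j\in\Iscr}\abs{(x_0)_j+z_j} + \sum_{j\not\in\Iscr}\abs{z_j}
\geq \sum_{j\in\Iscr}\abs{(x_0)_j} - \sum_{j\in\Iscr}\abs{z_j} + \sum_{j\not\in\Iscr}\abs{z_j},
\end{equation*}
by the reverse triangle inequality applied coordinatewise. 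The last two sums combine to a strictly positive quantity by \eqref{Eq:NullSpaceCondL1}, so $\norm{x_0+z}_1 > \norm{x_0}_1$, giving uniqueness.

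For the ``only if'' direction, I would argue by contrapositive: assuming \eqref{Eq:NullSpaceCondL1} fails, I construct a specific $x_0$ supported on $\Iscr$ that is \emph{not} uniquely recovered. If some $z\in\kernel(A)\setminus\{0\}$ satisfies $\sum_{j\in\Iscr}\abs{z_j}\geq\sum_{j\not\in\Iscr}\abs{z_j}$, the natural candidate is to set $x_0 = -z_{\Iscr}$ extended by zeros off $\Iscr$ (equivalently, $(x_0)_j = -z_j$ for $j\in\Iscr$ and $0$ otherwise). Then $x_0+z$ agrees with $z$ off $\Iscr$ and vanishes on $\Iscr$, and a direct computation of $\norm{x_0}_1$ versus $\norm{x_0+z}_1$ shows the perturbed point does no worse, contradicting unique recovery.

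The main obstacle is the boundary behavior in the ``only if'' direction: the failure inequality is non-strict ($\geq$), so the constructed competitor $x_0+z$ satisfies $\norm{x_0+z}_1 \leq \norm{x_0}_1$ but perhaps only with equality, which establishes non-uniqueness rather than outright failure of $x_0$ to be \emph{a} minimizer. I must be careful that ``uniquely recovered'' is interpreted as $x_0$ being the \emph{unique} minimizer, so that producing a distinct feasible point of equal or smaller norm suffices; handling the equality case and confirming $x_0+z\neq x_0$ (which follows since $z\neq 0$) is where the argument needs the most care.
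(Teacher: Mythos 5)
Your proof is correct and takes essentially the same route as the paper: although the paper cites Theorem~\ref{Thm:NullSpaceL1} without proof, its proof of the generalization (Theorem~\ref{Thm:NullSpaceMMV}) is exactly your argument with rows in place of coordinates --- the reverse triangle inequality applied over $\Iscr$ after splitting the norm for the ``if'' direction, and, for the ``only if'' direction, building a competitor by restricting the violating kernel vector $z$ to $\Iscr$ so that the perturbed point does no worse. Your handling of the non-strict inequality (non-uniqueness sufficing to refute unique recovery) matches the paper's conclusion that the constructed point ``cannot be the unique solution.''
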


\subsection{Optimality conditions for $\ell_1$ recovery}\label{Sec:OptimalityCondL1}

Sufficient conditions for recovery can be derived from the first-order
optimality conditions necessary for $x^*$ and $y^*$ to be solutions of
\eqref{Eq:L1} and \eqref{Eq:L1Dual} respectively. The
Karush-Kuhn-Tucker (KKT) conditions are also sufficient in this case
because the problems are convex. The Lagrangian function for
\eqref{Eq:L1} is given by
\[
\mathcal{L}(x,y) = \norm{x}_1 - y\T (Ax - b);
\]
the KKT conditions require that
\begin{equation}\label{Eq:KKTNSC}
Ax = b \text{and} 0\in \partial_x\mathcal{L}(x,y),
\end{equation}
where $\partial_x\mathcal{L}$ denotes the subdifferential of
$\mathcal{L}$ with respect to $x$. The second condition reduces to
\[
0 \in \sgn(x) - A\T y,
\]
where the signum function
\[
\sgn(\gamma) \in
\begin{cases}
   \sign(\gamma) & \hbox{if $\gamma \neq 0$,}
\\ [-1,1]        & \hbox{otherwise},
\end{cases}
\]
is applied to each individual component of $x$. It follows that $x^*$
is a solution of~\eqref{Eq:L1} if and only if $A\xstar=b$ and there
exists an $m$-vector $y$ such that $\abs{a_j\T y} \leq 1$ for
$j\not\in\support(x)$, and $a_j\T y = \sign(x_j^*)$ for all
$j\in\support(x)$. Fuchs~\cite{FUC2004a} shows that $x^*$ is the
unique solution of \eqref{Eq:L1} when $[a_j]_{j\in\support(x)}$ is
full rank and, in addition, $\abs{a_j\T y} < 1$ for all
$j\not\in\support(x)$. When the columns of $A$ are in general position
(i.e., no $k+1$ columns of $A$ span the same $k-1$ dimensional
hyperplane for $k \leq n$) we can weaken this condition by noting that
for such $A$, the solution of \eqref{Eq:L1} is always unique, thus
making the existence of a $y$ that satisfies~\eqref{Eq:KKTNSC} for $x_0$
a necessary and sufficient condition for $\ell_1$ to recover $x_0$.

\section{Recovery using sums-of-row norms}\label{Sec:SumOfRowNorms}

Our analysis of sparse recovery for the MMV problem of recovering
$X_0$ from $B=AX_0$ begins with an extension of
Theorem~\ref{Thm:NullSpaceL1} to recovery using the convex relaxation
\begin{equation}\label{Eq:SumOfNorms}
\minimize{X}\quad \sum_{j=1}^n\norm{X\row{j}}\quad\st\quad AX=B;
\end{equation}
note that the norm within the summation is arbitrary. Define
the row support of a matrix as
\[
\support_{\mathrm{row}}(X) = \{j \mid \norm{X\row{j}} \neq 0\}.
\]
With these definitions we have the following result. (A related result
is given by Stojnic et al.~\cite{STO2008PHa}.)

\begin{theorem}\label{Thm:NullSpaceMMV}
  Let $A$ be an $m\times n$ matrix, $k$ be a positive integer,
  $\mathcal{I} \subseteq \{1,\ldots,n\}$ be a fixed index set, and let
  $\norm{\cdot}$ denote any vector norm.  Then all $X_0 \in
  \Real^{n\times r}$ with $\support_{\mathrm{row}}(X_0)
  \subseteq\Iscr$ can be uniquely recovered from $B = AX_0$
  using~\eqref{Eq:SumOfNorms} if and only if for all $Z$ with columns
  $Z\col{k} \in \kernel(A)\setminus\{0\}$,
  \begin{equation}\label{Eq:NullSpaceCond}
    \sum_{j\in\mathcal{I}}\norm{Z\row{j}}
    < \sum_{j\not\in\mathcal{I}}\norm{Z\row{j}}.
  \end{equation}
\end{theorem}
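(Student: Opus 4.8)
The plan is to follow the template of Theorem~\ref{Thm:NullSpaceL1}, recasting uniform recovery as a statement about the objective of~\eqref{Eq:SumOfNorms} along the affine feasible set. Write $\Phi(X)\defd\sum_{j=1}^n\norm{X\row{j}}$ for the sum-of-row-norms objective. Since $B=AX_0$, the feasible set of~\eqref{Eq:SumOfNorms} is exactly $\{X_0+Z\mid AZ=0\}$, where $AZ=0$ means every column $Z\col{k}$ lies in $\kernel(A)$. Thus $X_0$ is the \emph{unique} solution of~\eqref{Eq:SumOfNorms} if and only if $\Phi(X_0+Z)>\Phi(X_0)$ for every nonzero $Z$ with $AZ=0$, and uniform recovery on $\Iscr$ is the assertion that this holds simultaneously for every $X_0$ with $\support_{\mathrm{row}}(X_0)\subseteq\Iscr$. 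The key simplification is that such an $X_0$ vanishes on every row $j\notin\Iscr$, so that
\[
\Phi(X_0+Z)=\sum_{j\in\Iscr}\norm{X_0\row{j}+Z\row{j}}+\sum_{j\notin\Iscr}\norm{Z\row{j}},\qquad \Phi(X_0)=\sum_{j\in\Iscr}\norm{X_0\row{j}}.
\]

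For sufficiency I would assume~\eqref{Eq:NullSpaceCond} and fix any $X_0$ with $\support_{\mathrm{row}}(X_0)\subseteq\Iscr$ together with any nonzero $Z$ satisfying $AZ=0$. Applying the reverse triangle inequality $\norm{X_0\row{j}+Z\row{j}}\ge\norm{X_0\row{j}}-\norm{Z\row{j}}$ row-by-row on $\Iscr$ and substituting into the identity above gives
\[
\Phi(X_0+Z)\ge\Phi(X_0)+\Big(\sum_{j\notin\Iscr}\norm{Z\row{j}}-\sum_{j\in\Iscr}\norm{Z\row{j}}\Big),
\]
and the parenthesized quantity is strictly positive by~\eqref{Eq:NullSpaceCond}. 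Hence $\Phi(X_0+Z)>\Phi(X_0)$ for all nonzero feasible perturbations, which is precisely unique recovery of $X_0$; since $X_0$ was arbitrary this yields uniform recovery on $\Iscr$.

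For necessity I would argue by contraposition: suppose~\eqref{Eq:NullSpaceCond} fails, so there is a nonzero $Z$ with $AZ=0$ and $\sum_{j\in\Iscr}\norm{Z\row{j}}\ge\sum_{j\notin\Iscr}\norm{Z\row{j}}$. Define $X_0$ by $X_0\row{j}=-Z\row{j}$ for $j\in\Iscr$ and $X_0\row{j}=0$ otherwise; then $\support_{\mathrm{row}}(X_0)\subseteq\Iscr$, so $X_0$ is a legitimate target. The competitor $X\defd X_0+Z$ satisfies $AX=AX_0=B$, and by construction $X\row{j}=0$ on $\Iscr$ and $X\row{j}=Z\row{j}$ off $\Iscr$, so $\Phi(X)=\sum_{j\notin\Iscr}\norm{Z\row{j}}\le\sum_{j\in\Iscr}\norm{Z\row{j}}=\Phi(X_0)$. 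Since $Z\neq0$ we have $X\neq X_0$, so $X_0$ is either not a minimizer or not the unique minimizer of~\eqref{Eq:SumOfNorms}; in either case recovery fails, contradicting uniform recovery on $\Iscr$.

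The argument is essentially a direct lifting of the scalar proof, with $\abs{z_j}$ replaced by the row norm $\norm{Z\row{j}}$, so I do not anticipate a genuinely hard step. The points requiring care are (i) confirming that the reverse triangle inequality, and hence the whole argument, holds verbatim for an arbitrary vector norm $\norm{\cdot}$ on $\Real^r$ (it does, since only the norm axioms are used); and (ii) the quantifier over $Z$: the clean statement ranges over all $Z\neq0$ with $AZ=0$, allowing some columns to vanish, and one should check that both the counterexample construction and the sufficiency bound tolerate zero columns, so that the ``$\kernel(A)\setminus\{0\}$'' hypothesis is read as $Z\neq0$ rather than as requiring every column to be nonzero.
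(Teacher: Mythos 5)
Your proposal is correct and takes essentially the same route as the paper's proof: necessity via the identical construction (an $X_0$ supported on $\Iscr$ built from $\pm Z$, whose feasible translate by $Z$ ties or beats it in objective value), and sufficiency via the same row-wise reverse triangle inequality applied over $\Iscr$. Your point (ii) about reading the hypothesis as $Z\neq 0$ with columns in $\kernel(A)$, allowing some columns to vanish, is in fact a place where you are more careful than the paper, whose sufficiency argument implicitly needs exactly those perturbations with zero columns; under that reading both of your directions are complete.
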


\begin{proof}
For the ``only if'' part, suppose that there is a $Z$ with columns
$Z\col{k}\in\kernel(A)\setminus\{0\}$ such
that~\eqref{Eq:NullSpaceCond} does not hold. Now, choose $X\row{j} =
Z\row{j}$ for all $j\in\mathcal{I}$ and with all remaining rows
zero. Set $B = AX$. Next, define $V = X - Z$, and note that $AV = AX
- AZ = AX = B$. The construction of $V$ implies that
$\sum_{j}\norm{X\row{j}} \geq \sum_{j}\norm{V\row{j}}$, and
consequently $X$ cannot be the unique solution of~\eqref{Eq:SumOfNorms}.

Conversely, let $X$ be an arbitrary matrix with
$\support_{\mathrm{row}}(X)\subseteq \mathcal{I}$, and let $B =
AX$. To show that $X$ is the unique solution of~\eqref{Eq:SumOfNorms}
it suffices to show that for any $Z$ with columns
$Z\col{k}\in\kernel(A)\setminus\{0\}$,
\[
\sum_{j}\norm{(X+Z)\row{j}} > \sum_{j}\norm{X\row{j}}.
\]
This is equivalent to
\[
\sum_{j\not\in\mathcal{I}}\norm{Z\row{j}} +
\sum_{j\in\mathcal{I}}\norm{(X+Z)\row{j}} -
\sum_{j\in\mathcal{I}}\norm{X\row{j}} > 0.
\]
Applying the reverse triangle inequality, $\norm{a+b} - \norm{b} \geq
-\norm{a}$, to the summation over $j\in\mathcal{I}$ and reordering
exactly gives condition \eqref{Eq:NullSpaceCond}.
\end{proof}

In the special case of the sum of $\ell_1$-norms, i.e., $\ell_{1,1}$,
summing the norms of the columns is equivalent to summing the norms of
the rows. As a result, \eqref{Eq:SumOfNorms} can be written as
\[
  \minimize{X}\quad \sum_{k=1}^r\norm{X\col{k}}_1
  \quad\st\quad
  AX\col{k} = B\col{k},\quad k=1,\ldots,r.
\]
Because this objective is separable, the problem can be decoupled and
solved as a series of independent basis pursuit problems, giving one
$X\col{k}$ for each column $B\col{k}$ of $B$. The following result
relates recovery using the sum-of-norms formulation
\eqref{Eq:SumOfNorms} to $\ell_{1,1}$ recovery.

\begin{theorem}\label{Thm:SumNormsImplication}
  Let $A$ be an $m\times n$ matrix, $r$ be a positive integer,
  $\mathcal{I} \subseteq \{1,\ldots,n\}$ be a fixed index set, and
  $\norm{\cdot}$ denote any vector norm. Then uniform recovery of all
  $X \in \Real^{n\times r}$ with $\support_{\mathrm{row}}(X)
  \subseteq \mathcal{I}$ using sums of norms \eqref{Eq:SumOfNorms}
  implies uniform recovery on $\mathcal{I}$ using $\ell_{1,1}$.
\end{theorem}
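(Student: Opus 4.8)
The plan is to route the implication through the two null-space characterizations already in hand: Theorem~\ref{Thm:NullSpaceMMV} for the sum-of-norms problem and Theorem~\ref{Thm:NullSpaceL1} for basis pursuit. First I would record the observation, made just above the statement, that the $\ell_{1,1}$ objective is separable across columns, so that uniform $\ell_{1,1}$ recovery of all $X$ with $\support_{\mathrm{row}}(X)\subseteq\mathcal{I}$ is nothing more than uniform $\ell_1$ recovery on $\mathcal{I}$ applied column by column; taking test matrices with a single nonzero column of arbitrary support in $\mathcal{I}$ shows these two notions are in fact equivalent. By Theorem~\ref{Thm:NullSpaceL1}, the latter holds precisely when the scalar null-space condition \eqref{Eq:NullSpaceCondL1} is satisfied for every $z\in\kernel(A)\setminus\{0\}$. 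The goal is therefore reduced to deriving \eqref{Eq:NullSpaceCondL1} from the matrix condition \eqref{Eq:NullSpaceCond}, which the hypothesis supplies through Theorem~\ref{Thm:NullSpaceMMV}.

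The key step is to specialize the matrix condition to a suitably chosen $Z$. Given an arbitrary $z\in\kernel(A)\setminus\{0\}$, I would set every column of $Z$ equal to $z$, i.e.\ take $Z = z\one\T$ with $\one$ the all-ones vector in $\Real^r$. Each column then lies in $\kernel(A)\setminus\{0\}$, so \eqref{Eq:NullSpaceCond} applies to this $Z$. For this choice each row satisfies $Z\row{j} = z_j\one$, and by positive homogeneity of the norm $\norm{Z\row{j}} = \abs{z_j}\,\norm{\one}$. Substituting into \eqref{Eq:NullSpaceCond} and cancelling the common positive factor $\norm{\one}$ yields exactly \eqref{Eq:NullSpaceCondL1}. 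Since $z$ was arbitrary, the scalar condition holds throughout $\kernel(A)\setminus\{0\}$, and the chain of equivalences above then delivers uniform $\ell_{1,1}$ recovery on $\mathcal{I}$.

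The one place that requires care --- and the reason the proof is not completely trivial --- is that Theorem~\ref{Thm:NullSpaceMMV} quantifies over matrices $Z$ \emph{all} of whose columns are nonzero elements of $\kernel(A)$. This rules out the most obvious reduction, namely placing $z$ in a single column and zeros elsewhere, since the zero columns would violate the hypothesis of that theorem. The all-columns-equal construction $Z = z\one\T$ is precisely what repairs this: it keeps every column in $\kernel(A)\setminus\{0\}$ while collapsing each row norm to a scalar multiple of $\abs{z_j}$, so that the matrix inequality reduces cleanly to the scalar one. I would also note that the argument uses only positive homogeneity of $\norm{\cdot}$ and nothing else, so it is insensitive to the choice of norm, consistent with the ``any vector norm'' hypothesis of the statement.
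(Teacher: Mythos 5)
Your proposal is correct and follows essentially the same route as the paper's own proof: both invoke Theorem~\ref{Thm:NullSpaceMMV} to extract the matrix null-space condition \eqref{Eq:NullSpaceCond}, specialize it to a matrix whose columns all equal an arbitrary $z\in\kernel(A)\setminus\{0\}$ so that each row norm collapses to $\abs{z_j}$ times a norm-dependent constant, and then apply Theorem~\ref{Thm:NullSpaceL1} together with the column-separability of $\ell_{1,1}$. Your explicit remark about why the single-nonzero-column construction is inadmissible (and why the all-columns-equal choice repairs it) is a nice clarification that the paper leaves implicit, but it is the same argument.
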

\begin{proof}
For uniform recovery on support $\mathcal{I}$ to hold it follows
from Theorem \ref{Thm:NullSpaceMMV} that for any matrix $Z$ with
columns $Z\col{k}\in\kernel(A)\setminus\{0\}$,
property~\eqref{Eq:NullSpaceCond} holds. In particular it holds for
$Z$ with $Z\col{k} = \zbar$ for all $k$, with
$\zbar\in\kernel(A)\setminus\{0\}$. Note that for these matrices there
exist a norm-dependent constant $\gamma$ such that
\[
\abs{\zbar_j} = \gamma \norm{Z\row{j}}.
\]
Since the choice of $\zbar$ was arbitrary, it follows
from~\eqref{Eq:NullSpaceCond} that the
NS-condition~\eqref{Eq:NullSpaceCondL1} for independent recovery of
vectors $B\col{k}$ using $\ell_1$ in Theorem \ref{Thm:NullSpaceL1} is
satisfied. Moreover, because $\ell_{1,1}$ is equivalent to independent
recovery, we also have uniform recovery on $\mathcal{I}$ using
$\ell_{1,1}$.
\end{proof}

An implication of Theorem~\ref{Thm:SumNormsImplication} is that the
use of restricted isometry conditions---or any technique, for that
matter---to analyze uniform recovery conditions for the sum-of-norms
approach necessarily lead to results that are no stronger than uniform
$\ell_1$ recovery. (Recall that the $\ell_{1,1}$ and $\ell_1$ norms
are equivalent).

\subsection{Recovery using $\ell_{1,2}$}

\begin{figure}[t]
\centering
\includegraphics[width=0.6\textwidth]{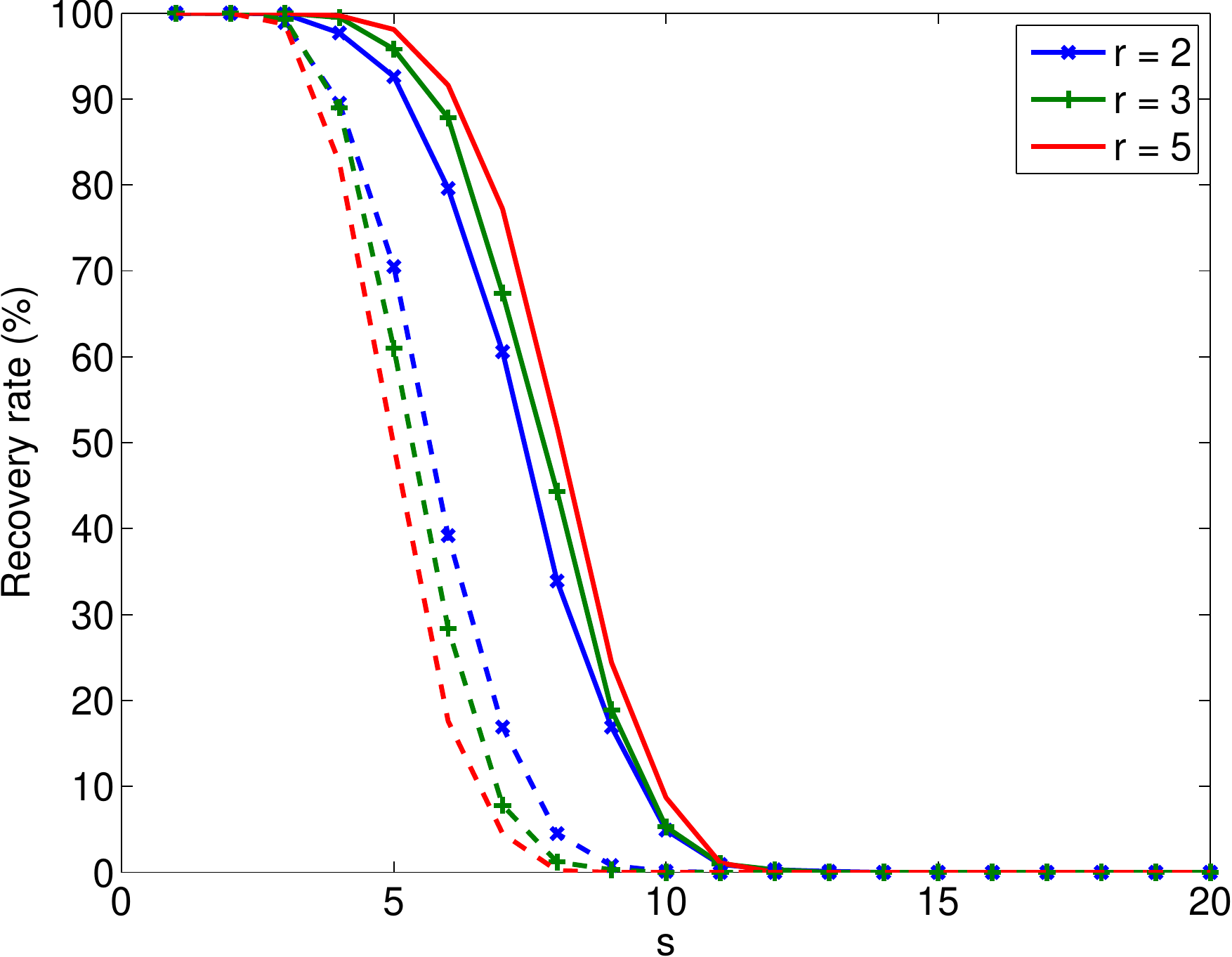}
\caption{Recovery rates for fixed, randomly drawn $20\times 60$
  matrices $A$, averaged over 1,000 trials at each row-sparsity level
  $s$. The nonzero entries in the $60\times r$ matrix $X_0$ are
  sampled i.i.d.\@ from the normal
  distribution. The solid and dashed lines represent $\ell_{1,2}$ and
  $\ell_{1,1}$ recovery, respectively.}\label{Fig:L11versusL12}
\end{figure}

In this section we take a closer look at the $\ell_{1,2}$ problem
\begin{equation}\label{Eq:L12}
\minimize{X}\quad \norm{X}_{1,2}\quad \st\quad AX=B,
\end{equation}
which is a special case of the sum-of-norms problem. Although
Theorem~\ref{Thm:SumNormsImplication} establishes that uniform
recovery via $\ell_{1,2}$ is no better than uniform recovery via
$\ell_{1,1}$, there are many situations in which it recovers signals
that $\ell_{1,1}$ cannot. Indeed, it is evident from
Figure~\ref{Fig:L11versusL12} that the probability of recovering
individual signals with random signs and support is much higher for
$\ell_{1,2}$. The reason for the degrading performance or $\ell_{1,1}$
with increasing $k$ is explained in Section~\ref{Sec:BoostedL1}.

In this section we construct examples for which $\ell_{1,2}$ works and
$\ell_{1,1}$ fails, and vice versa. This helps uncover some of the
structure of $\ell_{1,2}$, but at the same time implies that certain
techniques used to study $\ell_1$ can no longer be used
directly. Because the examples are based on extensions of the results
from Section \ref{Sec:OptimalityCondL1}, we first develop equivalent
conditions here.

\subsubsection{Sufficient conditions for recovery via $\ell_{1,2}$}
\label{Sec:OptimalityL12}

The optimality conditions of the $\ell_{1,2}$ problem~\eqref{Eq:L12}
play a vital role in deriving a set of sufficient conditions for
joint-sparse recovery.  In this section we derive the dual
of~\eqref{Eq:L12} and the corresponding necessary and sufficient
optimality conditions.  These allow us to derive sufficient conditions
for recovery via $\ell_{1,2}$.

The Lagrangian for~\eqref{Eq:L12} is defined as
\begin{equation}\label{Eq:LagrangeL12}
\Lscr(X,Y) = \norm{X}_{1,2} - \iprod{Y}{AX-B},
\end{equation}
where $\iprod{V}{W}\defd\trace(V\T W)$ is an inner-product defined over
real matrices.  The dual is then given by maximizing
\begin{equation} \label{Eq:ytbsup}
  \inf_{X}\Lscr(X,Y)
   = \inf_{X}\left\{\norm{X}_{1,2}-\iprod{Y}{AX-B}\right\}
   = \iprod{B}{Y} - \sup_{X} \left\{\iprod{A\T Y}{X} -
       \norm{X}_{1,2}\right\} 
\end{equation}
over $Y$. (Because the primal problem has only linear
constraints, there necessarily exists a dual solution $Y^*$ that
maximizes this expression \cite[Theorem 28.2]{ROC70}.)  To simplify
the supremum term, we note that for any convex, positively homogeneous
function $f$ defined over an inner-product space,
$$
\sup_v\ \{\iprod w v - f(v)\} =
\begin{cases}
  0      & \hbox{if $w\in\partial f(0)$,}
\\\infty & \hbox{otherwise.}
\end{cases}
$$
To derive these conditions, note that positive homogeneity of $f$
implies that $f(0)=0$, and thus $w\in\partial f(0)$ implies that
$\iprod w v \le f(v)$ for all $v$.  Hence, the supremum is achieved
with $v=0$.  If on the other hand $w\not\in\partial f(0)$, then there
exists some $v$ such that $\iprod w v > f(v)$, and by the positive
homogeneity of $f$, $\iprod{w}{\alpha v}-f(\alpha v)\to\infty$ as
$\alpha\to\infty$.  Applying this expression for the supremum
to~\eqref{Eq:ytbsup}, we arrive at the necessary condition
\begin{equation} \label{eq:2}
  A\T Y\in\partial\norm{0}_{1,2},
\end{equation}
which is required for dual feasibility.

We now derive an expression for the subdifferential
$\partial\norm{X}_{1,2}$. For rows $j$ where $\norm{X\row j}_2>0$, the
gradient is given by $\nabla\norm{X\row j}_2=X\row j/\norm{X\row
  j}_2$.  For the remaining rows, the gradient is not defined, but
$\partial\norm{X\row j}_2$ coincides with the set of unit
$\ell_2$-norm vectors $\Bscr_{\ell_2}^r = \{v\in\Real^r\ \mid
\norm{v}_2 \leq 1\}$. Thus, for each $j=1,\ldots,n$,
\begin{equation} \label{eq:1}
\partial_{X\row{j}}\norm{X}_{1,2} \in
\begin{cases}
  X\row{j}/\norm{X\row{j}}_2 & \hbox{if $\norm{X\row{j}}_2 > 0$,}
\\[4pt] \Bscr_{\ell_2}^r     & \hbox{otherwise.}
\end{cases}
\end{equation}
Combining this expression with \eqref{eq:2}, we arrive at the dual
of~\eqref{Eq:L12}:
\begin{equation} \label{Eq:L12MatrixDual}
  \maximize{Y}\quad\trace(B\T Y)
  \quad\st\quad\norm{A\T Y}_{\infty,2}\leq 1.
\end{equation}
The following conditions are therefore necessary and sufficient for a
primal-dual pair $(X^*,Y^*)$ to be optimal for~\eqref{Eq:L12} and its
dual~\eqref{Eq:L12MatrixDual}:
\begin{subequations}                       \label{Eq:NSCL12}
\begin{alignat}{2}
  AX^* &= B
  &&\qquad\hbox{(primal feasibility);} \label{Eq:NSCL12a}
\\\norm{A\T Y^*}_{\infty,2} &\le 1
  &&\qquad\hbox{(dual feasibility);}   \label{Eq:NSCL12b}
\\\norm{X^*}_{1,2} &= \trace(B\T Y^*)
  &&\qquad\hbox{(zero duality gap).}   \label{Eq:NSCL12c}
\end{alignat}
\end{subequations}

The existence of a matrix $Y^*$ that satisfies~\eqref{Eq:NSCL12}
provides a certificate that the feasible matrix $X^*$ is an optimal
solution of~\eqref{Eq:L12}. However, it does not guarantee that $X^*$
is also the unique solution. The following theorem gives sufficient
conditions, similar to those in Section \ref{Sec:OptimalityCondL1},
that also guarantee uniqueness of the solution.

\begin{theorem}
  Let $A$ be an $m\times n$ matrix, and $B$ be an $m\times r$ matrix.
  Then a set of sufficient conditions for $X$ to be the unique
  minimizer of~\eqref{Eq:L12} with Lagrange multiplier
  $Y\in\Real^{m\times r}$ and row support
  $\Iscr=\support_{\mathrm{row}}(X)$, is that
\begin{subequations}                    \label{Eq:SuffL12}
\begin{alignat}{2}
   & AX=B,                              \label{Eq:SuffL12a}
\\ & (A\T Y)\col{j} = (X^*)\row{j}/\norm{(X^*)\row{j}}_2,
   &\qquad j &\in\Iscr                  \label{Eq:SuffL12b}
\\ &\norm{(A\T Y)\col{j}}_2 < 1,
   &\qquad j &\not\in\Iscr              \label{Eq:SuffL12c}
\\ &\rank(A_\Iscr) = \vert\Iscr\vert.   \label{Eq:SuffL12d}
\end{alignat}
\end{subequations}
\end{theorem}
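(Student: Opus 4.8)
The plan is to establish uniqueness by contradiction, leveraging the necessary and sufficient optimality conditions~\eqref{Eq:NSCL12} together with the strict inequality~\eqref{Eq:SuffL12c} and the rank condition~\eqref{Eq:SuffL12d}. First I would verify that the given conditions imply $X$ is \emph{a} minimizer: conditions~\eqref{Eq:SuffL12a}--\eqref{Eq:SuffL12c} show that the proposed $Y$ is dual feasible (since $\norm{(A\T Y)\col{j}}_2 \le 1$ for all $j$, combining the equality on $\Iscr$ with the strict inequality off $\Iscr$), and that the duality gap vanishes. Indeed, $\trace(B\T Y) = \trace((AX)\T Y) = \iprod{A\T Y}{X} = \sum_{j\in\Iscr}\iprod{(A\T Y)\col{j}}{X\row{j}} = \sum_{j\in\Iscr}\norm{X\row{j}}_2 = \norm{X}_{1,2}$, using~\eqref{Eq:SuffL12b} and the fact that $X$ has row support in $\Iscr$. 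So~\eqref{Eq:NSCL12} holds and $X$ is optimal.

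For uniqueness, suppose $X' = X + Z$ is another minimizer, where $AZ = 0$ (so every column of $Z$ lies in $\kernel(A)$) and $Z \ne 0$. Using the optimal dual $Y$ as a certificate, I would write
\[
\norm{X'}_{1,2} \ge \iprod{A\T Y}{X'} = \iprod{A\T Y}{X} + \iprod{A\T Y}{Z} = \norm{X}_{1,2} + \iprod{A\T Y}{Z},
\]
where the first inequality is the generalized Cauchy--Schwarz bound $\norm{X'}_{1,2} \ge \iprod{A\T Y}{X'}$ valid whenever $\norm{A\T Y}_{\infty,2}\le 1$. Since $AZ=0$ gives $\iprod{A\T Y}{Z} = \iprod{Y}{AZ} = 0$, both $X$ and $X'$ being minimizers forces equality throughout. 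In particular $\norm{X'}_{1,2} = \iprod{A\T Y}{X'}$, which means every row of $X'$ must align with the corresponding dual block: for $j\not\in\Iscr$ the strict inequality $\norm{(A\T Y)\col{j}}_2 < 1$ in~\eqref{Eq:SuffL12c} forces $(X')\row{j} = 0$, hence $Z\row{j} = 0$ for all $j\not\in\Iscr$.

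It remains to rule out $Z$ supported entirely on $\Iscr$. Here the rank condition~\eqref{Eq:SuffL12d} enters: each column $Z\col{k}\in\kernel(A)$ is supported on $\Iscr$, so $A_{\Iscr}(Z\col{k})_{\Iscr} = AZ\col{k} = 0$; but $\rank(A_\Iscr) = \card{\Iscr}$ means $A_\Iscr$ has trivial kernel, forcing $(Z\col{k})_{\Iscr} = 0$ and hence $Z\col{k} = 0$ for every $k$. Thus $Z = 0$, contradicting $X' \ne X$, and uniqueness follows.

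\textbf{The main obstacle} I anticipate is the equality-analysis step for rows indexed by $\Iscr$: the argument above cleanly kills rows off $\Iscr$ via the \emph{strict} dual inequality, but one must be careful that the equality case of Cauchy--Schwarz on $\Iscr$ does not itself pin down $Z$ without invoking~\eqref{Eq:SuffL12d}. In general the alignment condition on $\Iscr$ only constrains the \emph{direction} of each row, not whether $Z$ vanishes there, so the rank condition is genuinely needed to complete the argument rather than being redundant. I would take care to present the equality chain so that it is transparent that~\eqref{Eq:SuffL12c} handles the complement and~\eqref{Eq:SuffL12d} handles the support, with no gap where a nonzero $Z$ on $\Iscr$ could survive.
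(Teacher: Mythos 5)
Your proof is correct and follows essentially the same route as the paper: primal and dual feasibility plus the zero-gap trace identity establish optimality, and then the strict dual inequality off $\Iscr$ combined with the rank condition~\eqref{Eq:SuffL12d} yields uniqueness. The only difference is one of detail: where the paper tersely asserts that $Y$ is a Lagrange multiplier for every solution and hence determines the common support, you prove that claim explicitly by the equality-case analysis of the dual-certificate inequality applied to a competing solution $X'=X+Z$ — a self-contained expansion of the paper's most compressed step.
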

\begin{proof}
  The first three conditions clearly imply that $(X,Y)$ primal and
  dual feasible, and thus satisfy~\eqref{Eq:NSCL12a}
  and~\eqref{Eq:NSCL12b}.  Conditions~\eqref{Eq:SuffL12b}
  and~\eqref{Eq:SuffL12c} together imply that
  $$
  \trace(B\T Y) \equiv \sum_{j=1}^n[(A\T Y)\col{j}]^T X\row j =
  \sum_{j=1}^n X\row j \equiv \norm{X}_{1,2}.
  $$
  The first and last identities above follow directly from the
  definitions of the matrix trace and of the norm
  $\norm{\cdot}_{1,2}$, respectively; the middle equality follows from
  the standard Cauchy inequality. Thus, the zero-gap requirement
  \eqref{Eq:NSCL12c} is satisfied. The
  conditions~\eqref{Eq:SuffL12a}--\eqref{Eq:SuffL12c} are therefore
  sufficient for $(X,Y)$ to be an optimal primal-dual solution
  of~\eqref{Eq:L12}.  Because $Y$ determines the support and is a
  Lagrange multiplier for every solution $X$, this support must be
  unique. It then follows from condition~\eqref{Eq:SuffL12d} that $X$
  must be unique.
\end{proof}

\subsection{Counter examples}

Using the sufficient and necessary conditions developed in the
previous section we now construct examples of problems for which
$\ell_{1,2}$ succeeds while $\ell_{1,1}$ fails, and vice
versa. Because of its simplicity, we begin with the latter.

\paragraph{Recovery using $\ell_{1,1}$ where $\ell_{1,2}$ fails.}
Let $A$ be an $m\times n$ matrix with $m < n$ and unit-norm columns
that are not scalar multiples of each other. Take any vector
$x\in\Real^n$ with at least $m+1$ nonzero entries. Then $X_0 =
\diag(x)$, possibly with all identically zero columns removed, can be
recovered from $B = AX_0$ using $\ell_{1,1}$, but not with
$\ell_{1,2}$. To see why, note that each column in $X_0$ has only a
single nonzero entry, and that, under the assumptions on $A$, each
one-sparse vector can be recovered individually using $\ell_1$ (the
points $\pm A\col{j} \in \Real^m$ are all $0$-faces of
$\mathcal{P}$) and therefore that $X_0$ can be recovered using
$\ell_{1,1}$.

On the other hand, for recovery using $\ell_{1,2}$ there would need to
exist a matrix $Y$ satisfying the first condition of \eqref{Eq:NSCL12}
for all $j \in \mathcal{I} = \{1,\ldots,n\}$. For this given $X_0$ this
reduces to $A^TY = M$, where $M$ is the identity matrix, with the same
columns removed as $X$. But this equality is impossible to satisfy
because $\rank(A) \leq m < m+1 \leq \rank(M)$. Thus, $X_0$ cannot be
the solution of the $\ell_{1,2}$ problem~\eqref{Eq:L12}.

\paragraph{Recovery using $\ell_{1,2}$ where $\ell_{1,1}$ fails.}

For the construction of a problem where $\ell_{1,2}$ succeeds and
$\ell_{1,1}$ fails, we consider two vectors, $f$ and $s$, with the
same support $\mathcal{I}$, in such a way that individual $\ell_1$
recovery fails for $f$, while it succeeds for $s$. In addition we
assume that there exists a vector $y$ that satisfies
$$
 y\T A\col{j} = \sign(s_j) \quad\hbox{for all $j\in\mathcal{I}$,}
 \textt{and}
 \abs{y\T A\col{j}} < 1 \quad\hbox{for all $j\not\in\mathcal{I}$;}
$$
i.e., $y$ satisfies conditions~\eqref{Eq:SuffL12b}
and~\eqref{Eq:SuffL12c}. Using the vectors $f$ and $s$, we construct
the 2-column matrix $X_0 = [(1-\gamma) s,\ \gamma f]$, and claim that
for sufficiently small $\gamma > 0$, this gives the desired
reconstruction problem. Clearly, for any $\gamma \neq 0$, $\ell_{1,1}$
recovery fails because the second column can never be recovered, and
we only need to show that $\ell_{1,2}$ does succeed.

For $\gamma = 0$, the matrix $Y = [y, 0]$ satisfies
conditions~\eqref{Eq:SuffL12b} and~\eqref{Eq:SuffL12c} and, assuming
\eqref{Eq:SuffL12d} is also satisfied, $X_0$ is the unique solution of
$\ell_{1,2}$ with $B = AX_0$. For sufficiently small $\gamma > 0$, the
conditions that $Y$ need to satisfy change slightly due to the
division by $\norm{X_0\row{j}}_2$ for those rows in
$\support_{\mathrm{row}}(X)$. By adding corrections to the columns of
$Y$ those new conditions can be satisfied. In particular, these
corrections can be done by adding weighted combinations of the columns
in $\bar{Y}$, which are constructed in such a way that it satisfies
$A_\mathcal{I}^T \Ybar = I$, and minimizes
$\norm{A_{\mathcal{I}^c}^T\bar{Y}}_{\infty,\infty}$ on the complement
$\mathcal{I}^c$ of $\mathcal{I}$.

Note that on the above argument can also be used to show that
$\ell_{1,2}$ fails for $\gamma$ sufficiently close to one. Because the
support and signs of $X$ remain the same for all $0 < \gamma < 1$, we
can conclude the following:
\begin{corollary}
  Recovery using $\ell_{1,2}$ is generally not only characterized
  by the row-support and the sign pattern of the nonzero entries in
  $X_0$, but also by the magnitude of the nonzero entries.
\end{corollary}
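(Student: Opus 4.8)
The plan is to establish the corollary by exhibiting a single one-parameter family of matrices that all share a common row-support and a common sign pattern, yet differ in their $\ell_{1,2}$-recoverability; this shows that recoverability cannot be a function of support and signs alone, and since the word ``generally'' only asks for the existence of such a phenomenon, one such family suffices. The family is already in hand from the preceding construction: take $X_0(\gamma) = [(1-\gamma)s,\ \gamma f]$ for $\gamma\in(0,1)$, built from the two vectors $f$ and $s$ having common support $\mathcal{I}$, where $s$ admits an $\ell_1$ certificate $y$ but $f$ does not.

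First I would verify the two invariances across the open interval. Since $1-\gamma>0$ and $\gamma>0$ throughout, scaling $s$ and $f$ by these positive factors leaves the support of each column unchanged, so $\support_{\mathrm{row}}(X_0(\gamma)) = \mathcal{I}$ for every $\gamma\in(0,1)$; likewise the sign of each nonzero entry is fixed. Hence support and sign pattern are constant over the entire family, and the only quantity varying with $\gamma$ is the relative magnitude of the entries.

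Next I would import the two recovery verdicts already argued at the ends of the parameter range. For $\gamma$ sufficiently small, the perturbation of $Y=[y,0]$ yields a certificate satisfying the sufficient conditions~\eqref{Eq:SuffL12} (using the certificate $y$ for $s$ and assuming the rank condition~\eqref{Eq:SuffL12d}), so $\ell_{1,2}$ recovers $X_0(\gamma)$. By the symmetric argument, for $\gamma$ close to one the dominant column is $\gamma f$, and the normalized rows $X_0(\gamma)\row{j}/\norm{X_0(\gamma)\row{j}}_2$ drive the second column of $A\T Y$ toward $\sign(f_j)$ on $\mathcal{I}$; because $f$ is not $\ell_1$-recoverable, no $Y$ can satisfy the dual-feasibility condition~\eqref{Eq:NSCL12b} in the limit, and hence $\ell_{1,2}$ fails for $\gamma$ near one.

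Finally, combining these gives two values of $\gamma$ in $(0,1)$---one producing success, one producing failure---while the row support and sign pattern coincide, so the recovery outcome must depend on the magnitudes, which is precisely the claim. I expect the delicate step to be the failure direction near $\gamma=1$: one must argue that the limiting necessary condition genuinely recovers the failing $\ell_1$ certificate problem for $f$ (treating rows with $f_j=0$ but $s_j\ne0$ separately, since there the normalized row tends to $[\sign(s_j),0]$ rather than $[0,\sign(f_j)]$), and then transfer the limiting infeasibility to an open neighborhood of $\gamma=1$ using continuity of the optimality conditions in $\gamma$.
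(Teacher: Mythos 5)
Your proposal matches the paper's own argument: the corollary is derived there from exactly the same one-parameter family $X_0(\gamma) = [(1-\gamma)s,\ \gamma f]$, with $\ell_{1,2}$ succeeding for small $\gamma>0$ (by perturbing the certificate $Y=[y,0]$) and failing for $\gamma$ near one by the symmetric argument, while the row support and sign pattern stay fixed on $(0,1)$. One small remark: since $f$ and $s$ are chosen with \emph{identical} support $\mathcal{I}$, the case $f_j=0$, $s_j\neq 0$ that you flag as delicate never arises, so your treatment is if anything more careful than needed.
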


A consequence of this conclusion is that the notion of faces used in
the geometrical interpretation of $\ell_1$ is not applicable to the
$\ell_{1,2}$ problem.

\subsection{Experiments}\label{Sec:L12Experiments}

To get an idea of just how much more $\ell_{1,2}$ can recover in the
above case where $\ell_{1,1}$ fails, we generated a $20\times 60$
matrix $A$ with entries i.i.d.\@ normally distributed, and determined
a set of vectors $s_i$ and $f_i$ with identical support for which
$\ell_{1}$ recovery succeeds and fails, respectively. Using
triples of vectors $s_i$ and $f_j$ we constructed row-sparse matrices
such as $X_0 = [s_1,f_1,f_2]$ or $X_0 = [s_1,s_2,f_2]$, and attempted
to recover from $B = AX_0W$, where $W =
\diag(\omega_1,\omega_2,\omega_3)$ is a diagonal weighting matrix with
nonnegative entries and unit trace, by solving \eqref{Eq:L12}. For
problems of this size, interior-point methods are very efficient and
we use SDPT3~\cite{Software:SDPT3} through the CVX
interface~\cite{Software:CVX,GRA2008Ba}. We consider $X_0$ to be
recovered when the maximum absolute difference between $X_0$ and the
$\ell_{1,2}$ solution $X^*$ is less than $10^{-5}$. The results of the
experiment are shown in Figure \ref{Fig:Triangles}. In addition to the
expected regions of recovery around individual columns $s_i$ and
failure around $f_i$, we see that certain combinations of vectors
$s_i$ still fail, while other combinations of vectors $f_i$ may be
recoverable. By contrast, when using $\ell_{1,1}$ to solve the
problem, any combination of $s_i$ vectors can be recovered while no
combination including an $f_i$ can be recovered.

\begin{figure}[t]
\centering
\begin{tabular}{@{}cccc@{}}
\includegraphics[width=0.23\textwidth]{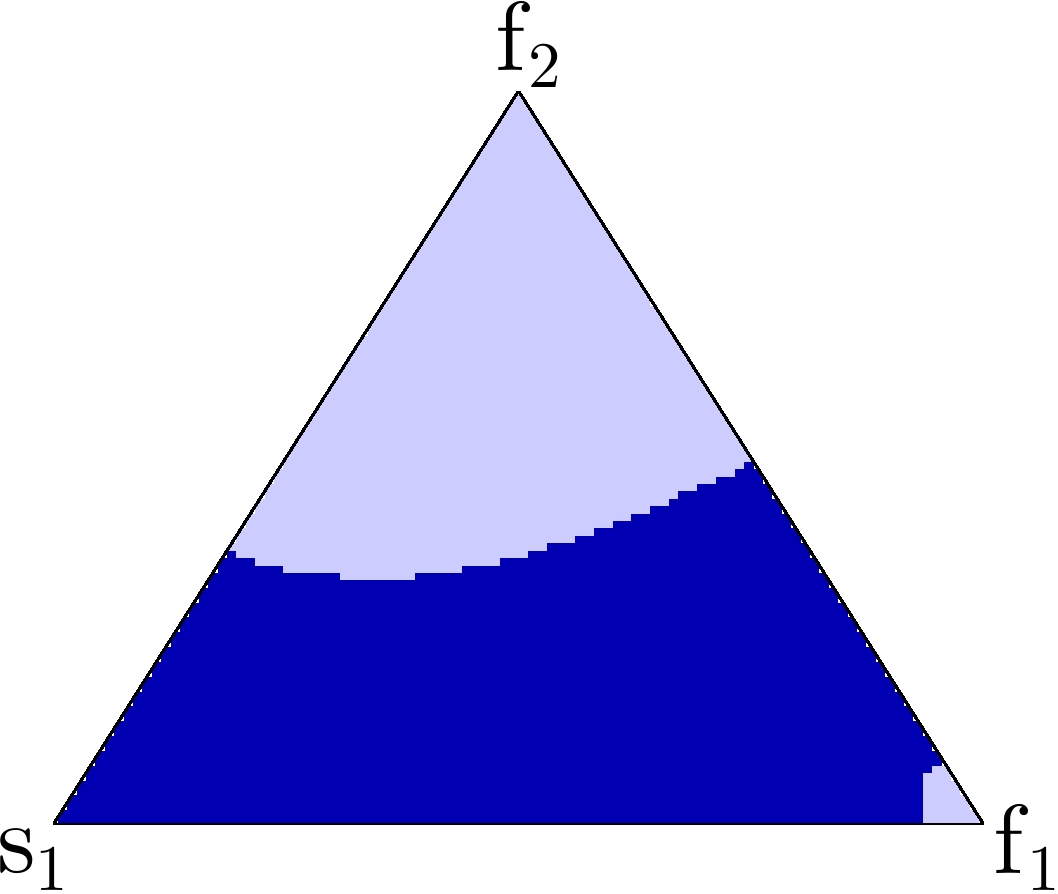} &
\includegraphics[width=0.23\textwidth]{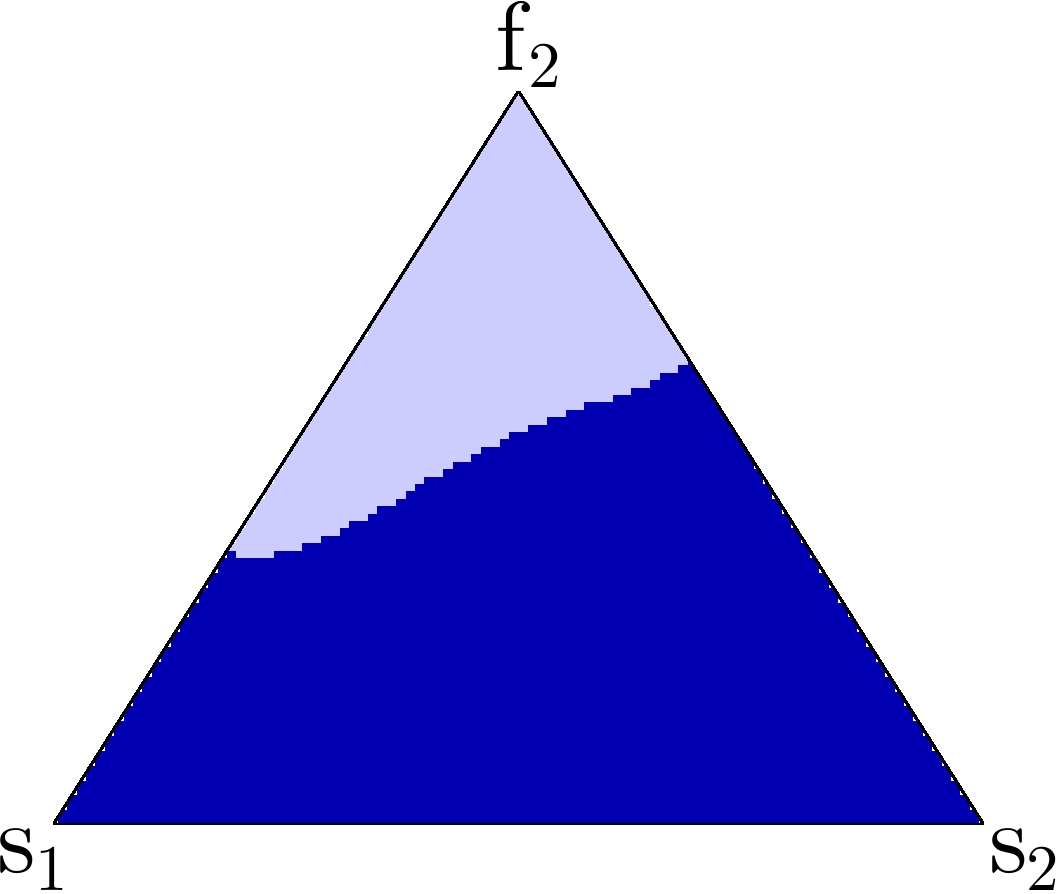} &
\includegraphics[width=0.23\textwidth]{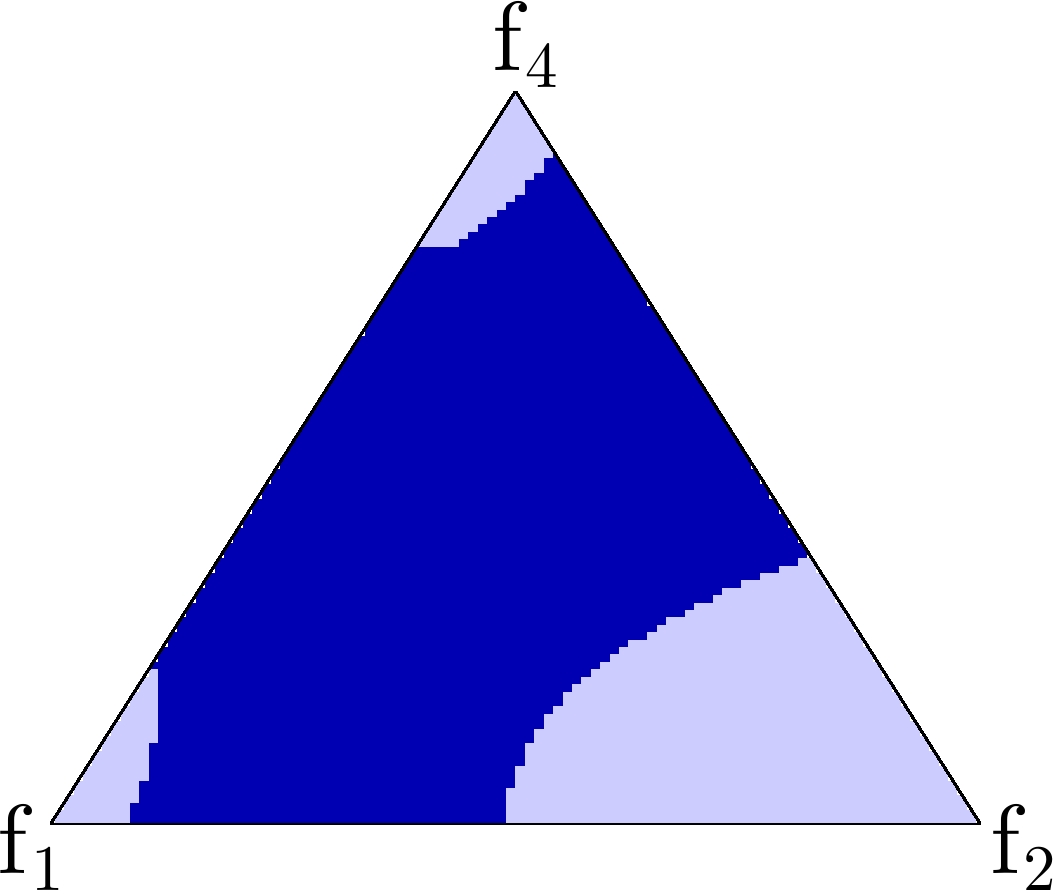}&
\includegraphics[width=0.23\textwidth]{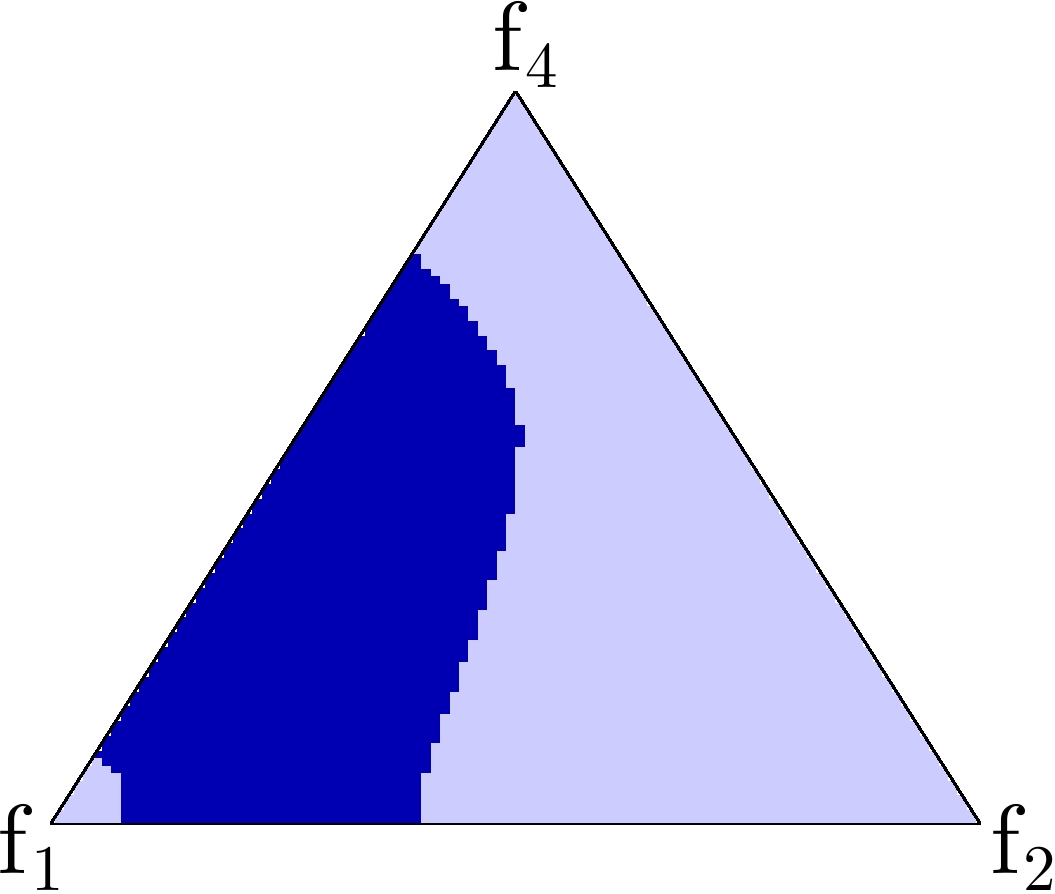}\\
$\abs{\Iscr}=5$ & $\abs{\Iscr}=5$ & $\abs{\Iscr}=5$ & $\abs{\Iscr}=7$
\\[6pt]
\includegraphics[width=0.23\textwidth]{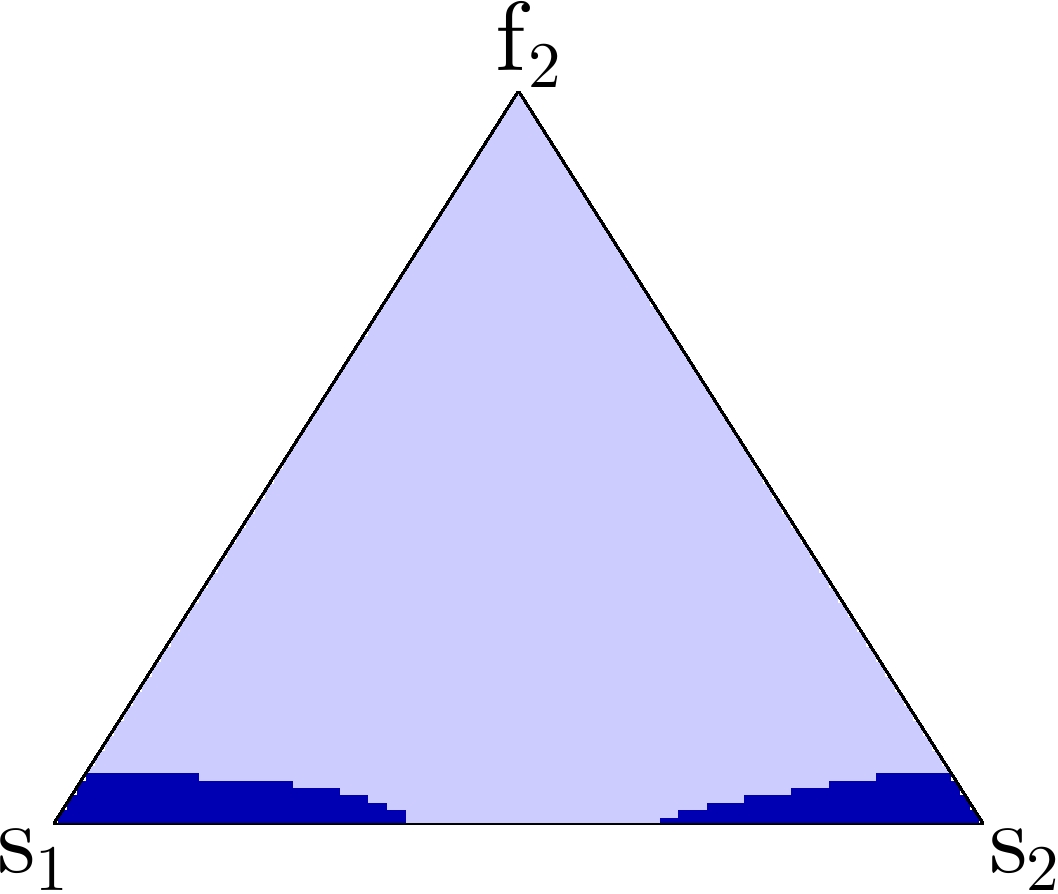}&
\includegraphics[width=0.23\textwidth]{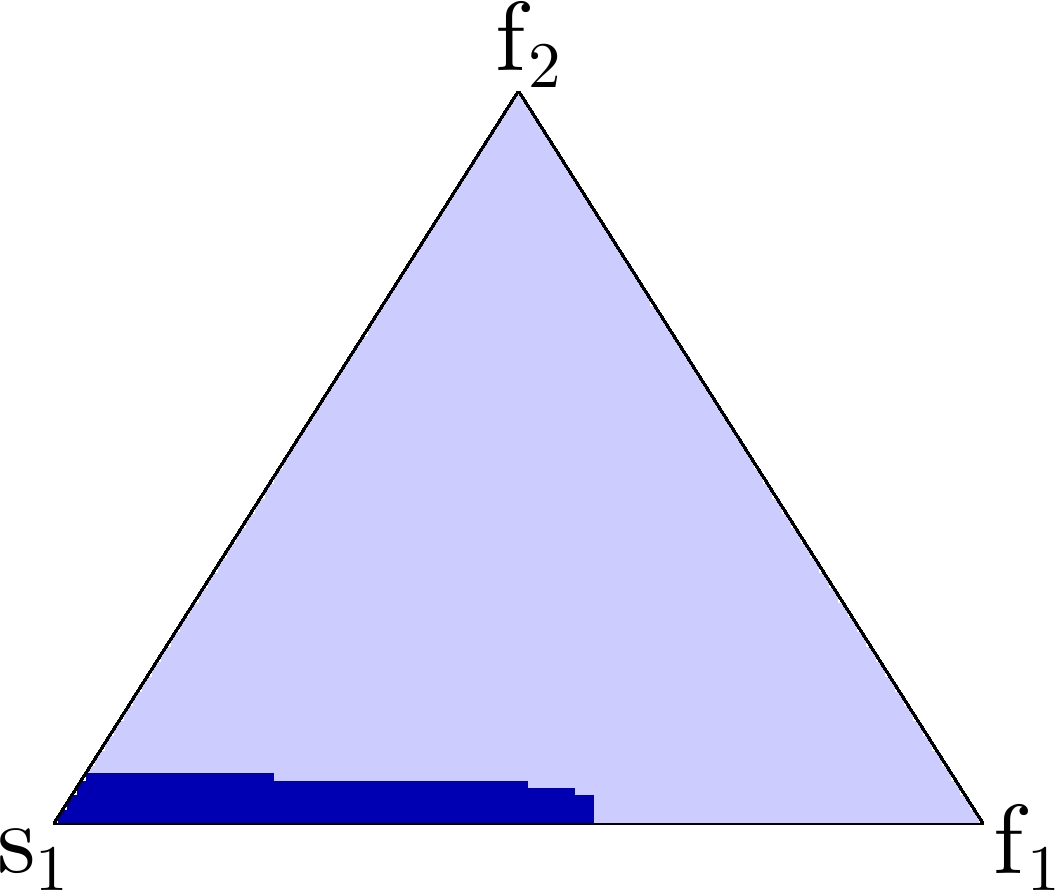}& 
\includegraphics[width=0.23\textwidth]{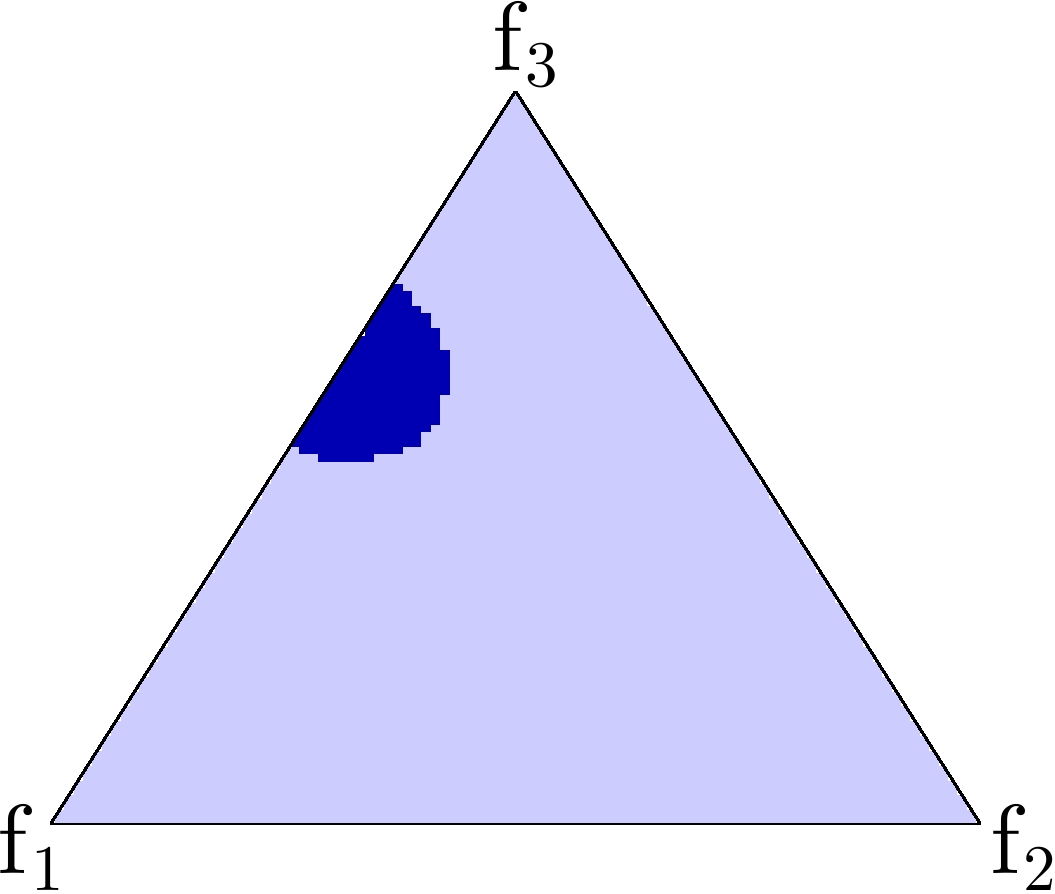}&
\includegraphics[width=0.23\textwidth]{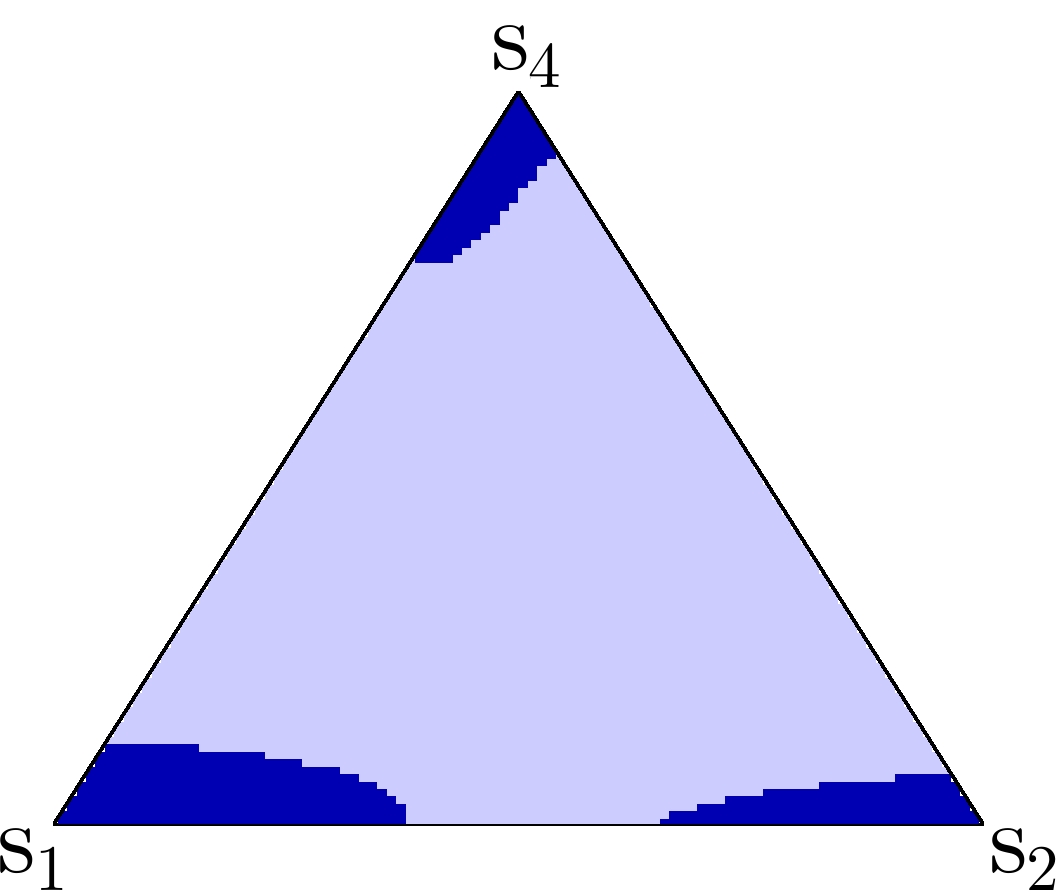}\\
$\abs{\Iscr}=10$ & $\abs{\Iscr}=10$ & $\abs{\Iscr}=10$ & $\abs{\Iscr}=10$
\end{tabular}
\caption{Generation of problems where $\ell_{1,2}$ succeeds, while
  $\ell_{1,1}$ fails. For a $20\times 60$ matrix $A$ and fixed
  support of size $\abs{\Iscr}=5,7,10$, we create vectors $f_i$ that
  cannot be recovered using $\ell_1$, and vectors $s_i$ than can be
  recovered. Each triangle represents an $X_0$ constructed from the
  vectors denoted in the corners. The location in the triangle
  determines the weight on each vector, ranging from zero to one, and
  summing up to one. The dark areas indicates the weights for which
  $\ell_{1,2}$ successfully recovered $X_0$.}
\label{Fig:Triangles}
\end{figure}

\section{Boosted $\ell_1$}\label{Sec:BoostedL1}

As described in Section~\ref{Sec:SumOfRowNorms}, recovery using
$\ell_{1,1}$ is equivalent to individual $\ell_1$ recovery of each
column $x_k := X_0\col{k}$ based on $b_k\defd B\col{k}$, for
$k=1,\ldots,r$:
\begin{equation}\label{Eq:L1i}
\minimize{x}\quad \norm{x}_1\quad\st\quad Ax = b_k.
\end{equation}
Assuming that the signs of nonzero entries in the support of each
$x_k$ are drawn i.i.d.\@ from $\{1,-1\}$, we can express the
probability of recovering a matrix $X_0$ with row support
$\mathcal{I}$ using $\ell_{1,1}$ in terms of the probability of
recovering vectors on that support using $\ell_1$. To see how, note
that $\ell_{1,1}$ recovers the original $X_0$ if and only if each
individual problem in \eqref{Eq:L1i} successfully recovers each
$x_k$. For the above class of matrices $X_0$ this therefore gives a
recovery rate of
\[
P_{\ell_{1,1}}(A,\mathcal{I},k) = \left[P_{\ell_1}(A,\mathcal{I})\right]^r.
\]
Using $\ell_{1,1}$ to recover $X_0$ is clearly not a good idea. Note
also that uniform recovery of $X_0$ on a support $\mathcal{I}$ remains
unchanged, regardless of the number of observations, $r$, that are
given. As a consequence of Theorem~\ref{Thm:SumNormsImplication}, this
also means that the uniform-recovery properties for any sum-of-norms
approach cannot increase with $r$. This clearly defeats the purpose of
gathering multiple observations.

In many instances where $\ell_{1,1}$ fails, it may still recover a
subset of columns $x_k$ from the corresponding observations $b_k$. It
seems wasteful to discard this information because if we could
recognize a single correctly recovered $x_k$, we would immediately
know the row support $\Iscr = \support_{\mathrm{row}}(X_0) =
\support(x_k)$ of $X_0$. Given the correct support we can recover the
nonzero part $\bar{X}$ of $X_0$ by solving
\begin{equation}\label{Eq:AIXB}
\minimize{\bar{X}}\quad \norm{A_{\mathcal{I}}\bar{X} - B}_{F}.
\end{equation}
In practice we obviously do not know the correct support, but when a
given solution $x_k^*$ of~\eqref{Eq:L1i} that is sufficiently sparse,
we can try to solve~\eqref{Eq:AIXB} for that support and verify if the
residual at the solution is zero. If so, we construct the final $X^*$
using the non-zero part and declare success. Otherwise we simply
increment $k$ and repeat this process until there are no more
observations and recovery was unsuccessful. We refer to this
algorithm, which is reminiscent of the \rembo{} approach
\cite{MIS2008Ea}, as boosted $\ell_1$; its sole aim is to provide a
bridge to the analysis of \rembo. The complete boosted $\ell_1$
algorithm is outlined in Figure~\ref{Alg:BoostedL1}.

The recovery properties of the boosted $\ell_1$ approach are opposite
from those of $\ell_{1,1}$: it fails only if all individual columns
fail to be recovered using $\ell_1$. Hence, given an unknown $n\times r$
matrix $X$ supported on $\mathcal{I}$ with its sign pattern uniformly
random, the boosted $\ell_1$ algorithm gives an expected recovery rate
of
\begin{equation}\label{Eq:PBoostedL1}
P_{\ell_1^B}(A,\mathcal{I},r) = 1-\left[1-P_{\ell_1}(A,\mathcal{I})\right]^r.
\end{equation}

%\subsection{Experiments}\label{Sec:BoostedL1Experiments}
To experimentally verify this recovery rate, we generated a $20\times
80$ matrix $A$ with entries independently sampled from the normal
distribution and fixed a randomly chosen support set $\mathcal{I}_s$
for three levels of sparsity, $s=8,9,10$. On each of these three
supports we generated vectors with all possible sign patterns and
solved~\eqref{Eq:L1} to see if they could be recovered or not (see
Section~\ref{Sec:L12Experiments}). This gives exactly the face counts
required to compute the $\ell_1$ recovery probability
in~\eqref{Eq:Pl1AI}, and the expected boosted $\ell_1$ recovery rate
in~\eqref{Eq:PBoostedL1}

For the empirical success rate we take the average over 1,000 trials
with random coefficient matrices $X$ supported on $\mathcal{I}_s$, and
its nonzero entries independently drawn from the normal
distribution. To reduce the computational time we avoid solving
$\ell_1$ and instead compare the sign pattern of the current solution
$x_k$ against the information computed to determine the face counts
(both $A$ and $\mathcal{I}_s$ remain fixed).  The theoretical and
empirical recovery rates using boosted $\ell_1$ are plotted in
Figure~\ref{Fig:BoostedL1}.

\begin{figure}
\begin{tabular}{ll}
\begin{minipage}{0.44\textwidth}
\fbox{\begin{minipage}{0.95\textwidth}
  \smallskip
  given $A$, $B$\\
  \For{$k=1,\ldots,r$}{
        solve \eqref{Eq:L1} with $b_k = B\col{k}$ to get $x$\\
        $\mathcal{I} \gets \support(x)$\\
	\If{$\vert\mathcal{I}\vert < m/2$}{
           solve \eqref{Eq:AIXB} to get $X$ \\
        \If{$A_{\mathcal{I}}X = B$}{
           $X^* = 0$\\
           $(X^*)\row{j} \gets X\row{j}$ for $j\in\Iscr$\\
           return solution $X^*$
        }}
  }
  return failure
  \smallskip
\end{minipage}}
\end{minipage} &
\begin{minipage}{0.53\textwidth}
\includegraphics[width=\textwidth]{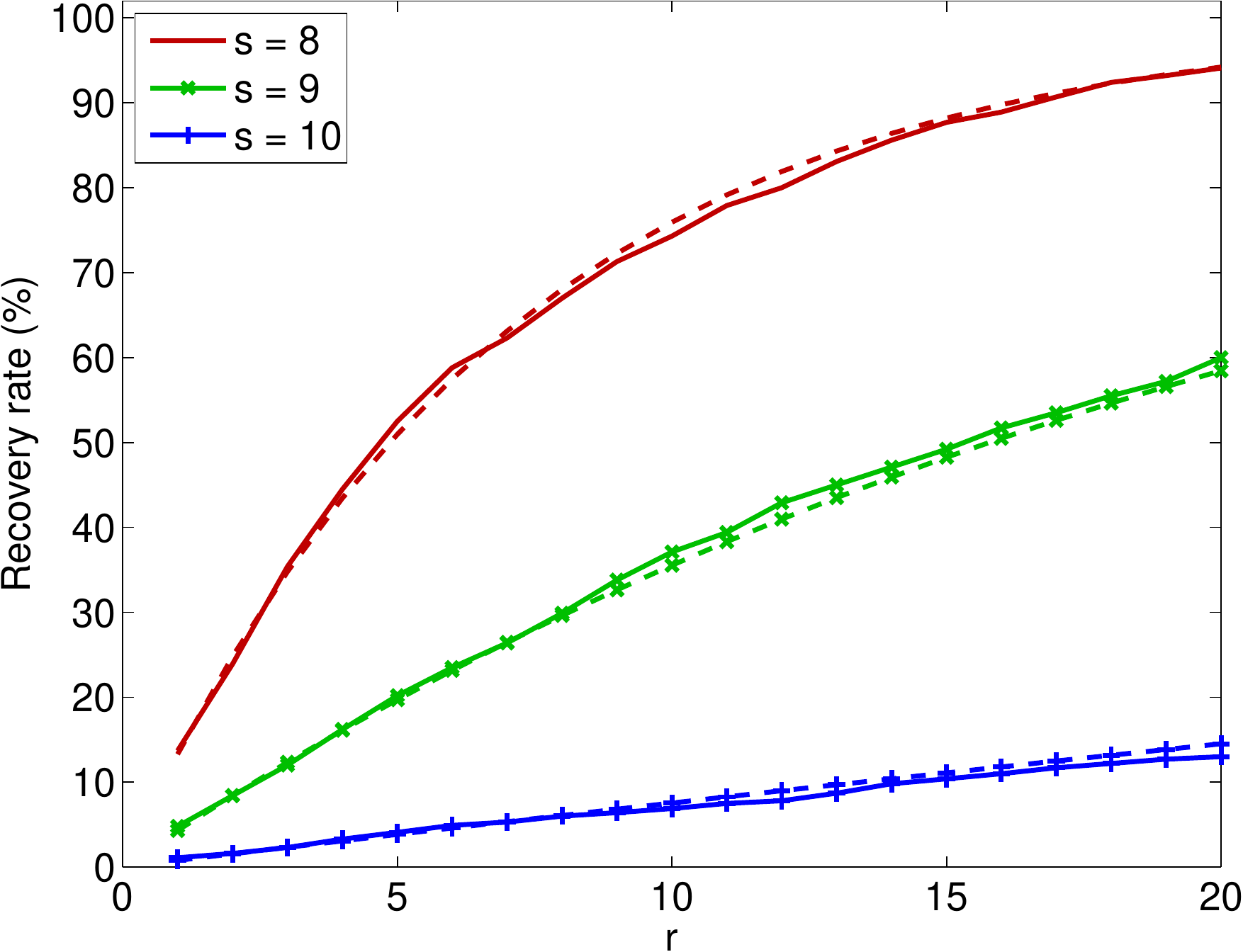}
\end{minipage} \\
\begin{minipage}[t]{0.44\textwidth}
\caption{The boosted $\ell_1$ algorithm}\label{Alg:BoostedL1}
\end{minipage}
 &
\begin{minipage}[t]{0.53\textwidth}
\caption{Theoretical (dashed) and experimental (solid) performance of
  boosted $\ell_1$ for three problem instances with different row
  support $s$.}\label{Fig:BoostedL1}
% The experimental results shown are the average
%  performance over 1,000 runs with support fixed.}
\end{minipage} \\
\end{tabular}
\end{figure}

\section{Recovery using \rembo}\label{Sec:ReMBo}

The boosted $\ell_1$ approach can be seen as a special case of the
\rembo~\cite{MIS2008Ea} algorithm. \rembo{} proceeds by taking a
random vector $w \in \Real^r$ and combining the individual
observations in $B$ into a single weighted observation $b \defd
Bw$. It then solves a single measurement vector problem $Ax=b$ for
this $b$ (we shall use $\ell_1$ throughout) and checks if the computed
solution $x^*$ is sufficiently sparse. If not, the above steps are
repeated with a different weight vector $w$; the algorithm stops when
a maximum number of trials is reached. If the support $\mathcal{I}$ of
$x^*$ is small, we form $A_{\mathcal{I}} = [A\col
j]_{j\in\mathcal{I}}$, and check if~\eqref{Eq:AIXB} has a solution
$\bar{X}$ with zero residual. If this is the case we have the nonzero
rows of the solution $X^*$ in $\bar{X}$ and are done. Otherwise, we
simply proceed with the next $w$. The \rembo{} algorithm reduces to
boosted $\ell_1$ by limiting the number of iterations to $r$ and
choosing $w = e_i$ in the $i$th iteration. We summarize the
\rembo-$\ell_1$ algorithm in Figure~\ref{Alg:ReMBo}. The formulation
given in~\cite{MIS2008Ea} requires a user-defined threshold on the
cardinality of the support $\mathcal{I}$ instead of the fixed
threshold $m/2$. Ideally this threshold should be half of the
spark~\cite{DON2003Ea} of A, where
\[
\spark(A) \defd \min_{z \in \kernel(A)\setminus\{0\}}\ \norm{z}_0
\]
which is the number of nonzeros of the sparsest vector in the kernel
of $A$; any vector $x_0$ with fewer than $\spark(A)/2$ nonzeros is the
unique sparsest solution of $Ax = Ax_0 =
b$~\cite{DON2003Ea}. Unfortunately, the spark is prohibitively
expensive to compute, but under the assumption that $A$ is in general
position, $\spark(A) = m+1$. Note that choosing a higher value can
help to recover signals with row sparsity exceeding $m/2$. However, in
this case it can no longer be guaranteed to be the sparsest solution.

\begin{figure}
\begin{tabular}{ll}
\begin{minipage}{0.44\textwidth}
\fbox{\begin{minipage}{0.95\textwidth}
  \smallskip
  given $A$, $B$. Set $\mathrm{Iteration} \gets 0$\\
  \While{$\mathrm{Iteration} < \mathrm{MaxIteration}$}{
	$w \gets \mathrm{Random}(n,1)$\\
        solve \eqref{Eq:L1} with $b = Bw$ to get $x$\\
        $\mathcal{I} \gets \support(x)$\\\
	\If{$\vert\mathcal{I}\vert < m/2$}{
           solve \eqref{Eq:AIXB} to get $X$ \\
        \If{$A_{\mathcal{I}}X = B$}{
           $X^* = 0$\\
           $(X^*)\row{j} \gets X\row{j}$ for $j\in\mathcal{I}$\\
           return solution $X^*$
        }}
	$\mathrm{Iteration} \gets \mathrm{Iteration}+1$\;
  }
  return failure
  \smallskip
\end{minipage}}
\end{minipage} &
\begin{minipage}{0.53\textwidth}
\includegraphics[width=\textwidth]{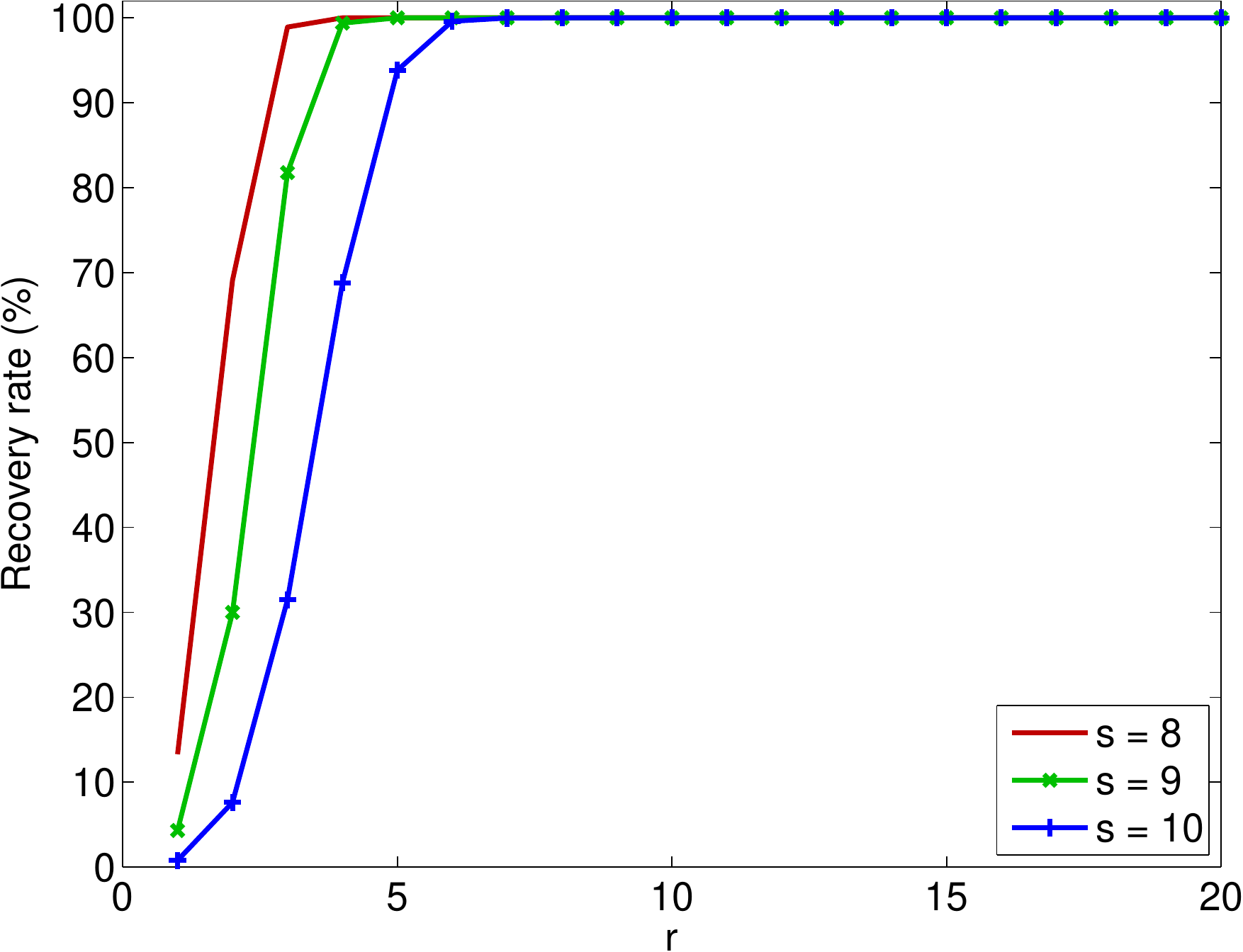}
\end{minipage} \\
\begin{minipage}[t]{0.44\textwidth}
\caption{The \rembo-$\ell_1$ algorithm}\label{Alg:ReMBo}
\end{minipage}
 &
\begin{minipage}[t]{0.53\textwidth}
\caption{Theoretical performance model for \rembo{} on three problem
   instances with different sparsity levels $s$.}\label{Fig:ReMBoBound}
\end{minipage} \\
\end{tabular}
\end{figure}

To derive the performance analysis of \rembo, we fix a support
$\mathcal{I}$ of cardinality $s$, and consider only signals with
nonzero entries on this support. Each time we multiply $B$ by a weight
vector $w$, we in fact create a new problem with an $s$-sparse
solution $x_0 = X_0w$ corresponding with a right-hand side $b = Bw =
AX_0w = Ax_0$. As reflected in~\eqref{Eq:Pl1AI}, recovery of $x_0$
using $\ell_1$ depends only on its support and sign pattern. Clearly,
the more sign patterns in $x_0$ that we can generate, the higher the
probability of recovery. Moreover, due to the elimination of
previously tried sign patterns, the probability of recovery goes up
with each new sign pattern (excluding negation of previous sign
patterns). The maximum number of sign patterns we can check with
boosted $\ell_1$ is the number of observations $r$.  The question thus
becomes, how many different sign patterns we can generate by taking
linear combinations of the columns in $X_0$? (We disregard the
situation where elimination occurs and $\card{\support(X_0w)} < s$.)
Equivalently, we can ask how many orthants in $\Real^s$ (each one
corresponding to a different sign pattern) can be properly intersected
by the hyperplane given by the range of the $s\times r$ matrix
$\bar{X}$ consisting of the nonzero rows of $X_0$ (with proper we mean
intersection of the interior). In Section~\ref{Sec:Cnd} we derive an
exact expression for the maximum number of proper orthant
intersections in $\Real^n$ by a hyperplane generated by $d$ vectors,
denoted by $C(n,d)$.

Based on the above reasoning, a good model for the recovery rate of
$n\times r$ matrices $X_0$ with $\support_{\mathrm{row}}(X_0) =
\mathcal{I} < m/2$ using \rembo{} is given by
\begin{equation}\label{Eq:PReMBo}
P\subr(A,\mathcal{I},r)
= 1 - \prod_{i=1}^{C(\card{\mathcal{I}},r)/2}\left[
  1 - \frac{\mathcal{F}_{\mathcal{I}}(A\Cscr)}%
           {\mathcal{F}_{\mathcal{I}}(\Cscr)-2(i-1)}
  \right].
\end{equation}
The term within brackets denotes the probability of failure and the
fraction represents the success rate, which is given by the ratio of
the number of faces $\mathcal{F}_{\mathcal{I}}(A\Cscr)$ that
survived the mapping to the total number of faces to consider. The
total number reduces by two at each trial because we can exclude the
face $f$ we just tried, as well as $-f$. The factor of two in
$C(\card{\mathcal{I}},r)/2$ is also due to this
symmetry\footnote{Henceforth we use the convention that the uniqueness
  of a sign pattern is invariant under negation.}.

This model would be a bound for the average performance of \rembo{} if
the sign patterns generated would be randomly sampled from the space
of all sign patterns on the given support. However, because it is
generated from the orthant intersections with a hyperplane, the actual
pattern is highly structured. Indeed, it is possible to imagine a
situation where the $(s-1)$-faces in $\Cscr$ that perish in the
mapping to $A\Cscr$ have sign patterns that are all contained in
the set generated by a single hyperplane. Any other set of sign
patterns would then necessarily include some faces that survive the
mapping and by trying all patterns in that set we would recover
$X_0$. In this case, the average recovery over all $X_0$ on that
support could be much higher than that given by~\eqref{Eq:PReMBo}. We
do not yet fully understand how the surviving faces of $\Cscr$
are distributed. Due to the simplicial structure of the facets of
$\Cscr$, we can expect the faces that perish to be partially
clustered (if a $(d-2)$-face perishes, then so will the two
$(d-1)$-faces whose intersection gives this face), and partially
unclustered (the faces that perish while all their sub-faces
survive). Note that, regardless of these patterns, recovery is
guaranteed in the limit whenever the number of unique sign patterns
tried exceeds half the number of faces lost,
$(\card{\mathcal{F}_{\mathcal{I}}(\Cscr)} -
\card{\mathcal{F}_{\mathcal{I}}(\mathcal{AC})})/2$.

Figure~\ref{Fig:ReMBoBound} illustrates the theoretical performance
model based on $C(n,d)$, for which we derive the exact expression
in Section~\ref{Sec:Cnd}. In Section~\ref{Sec:ReMBoLtd} we discuss
practical limitations, and in Section~\ref{Sec:ReMBoExperiments} we
empirically look at how the number of sign patterns generated grows
with the number of normally distributed vectors $w$, and how this
affects the recovery rates. To allow comparison between \rembo{} and
boosted $\ell_1$, we used the same matrix $A$ and support
$\mathcal{I}_s$ used to generate Figure~\ref{Fig:BoostedL1}.

\subsection{Maximum number of orthant intersections with subspace}\label{Sec:Cnd}

\begin{theorem}
  Let $C(n,d)$ denote the maximum attainable number of orthant
  interiors intersected by a hyperplane in $\Real^n$ generated by
  $d$ vectors. Then $C(n,1) = 2$, $C(n,d) = 2^n$ for $d \geq n$. In
  general, $C(n,d)$ is given by
\begin{equation}\label{Eq:Cnd}
C(n,d) = C(n-1,d-1) + C(n-1,d) = 2 \sum_{i=0}^{d-1}{n-1 \choose i}.
\end{equation}
\end{theorem}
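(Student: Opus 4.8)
The plan is to translate the geometric quantity $C(n,d)$ into a count of full-dimensional cells of a central hyperplane arrangement, and then to obtain that count by the standard incremental argument on arrangements. First I would set up the dictionary. A hyperplane through the origin generated by $d$ vectors is the range of some matrix $M\in\Real^{n\times d}$, so a generic point of it is $v=Mu$ with $u\in\Real^d$, and its $i$th coordinate is $v_i=\iprod{M\row{i}}{u}$. The orthant containing $v$ is therefore recorded by the sign vector $(\sign\iprod{M\row{1}}{u},\ldots,\sign\iprod{M\row{n}}{u})$, and an orthant interior is met precisely when some $u$ makes all $n$ inner products nonzero with the prescribed signs. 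Consequently the number of orthant interiors met by $\range(M)$ equals the number of distinct sign vectors with no zero entries realized over $u\in\Real^d$, which equals the number of connected regions of the central arrangement $\mathcal{H}=\{H_1,\ldots,H_n\}$ with $H_i=\{u\mid\iprod{M\row{i}}{u}=0\}$: the sign vector is constant on each region, and any two distinct regions are separated by at least one $H_i$. Maximizing over $V=\range(M)$ thus amounts to maximizing the number of regions of $n$ central hyperplanes in $\Real^d$, and the maximum is attained when the normals $M\row{i}$ are in \emph{general position}.

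Next I would dispose of the two boundary cases directly in this language. For $d=1$ each $H_i$ is just the origin of $\Real^1$, so the arrangement has only the two rays as its regions and $C(n,1)=2$. For $d\ge n$ the normals $M\row{i}$ can be taken linearly independent, so the map $u\mapsto(\iprod{M\row{1}}{u},\ldots,\iprod{M\row{n}}{u})$ is onto $\Real^n$; every one of the $2^n$ sign vectors is then realized and $C(n,d)=2^n$.

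The core of the argument is the recursion, obtained by adding one hyperplane to an arrangement of $n-1$ hyperplanes in general position in $\Real^d$. The new hyperplane $H_n$ inherits from the other $n-1$ a central arrangement of the traces $H_i\cap H_n$ inside $H_n\cong\Real^{d-1}$, and in general position this induced arrangement has $C(n-1,d-1)$ regions. Each such $(d-1)$-dimensional cell lies in the interior of exactly one region of the old arrangement and slices it into two, while regions not met by $H_n$ are left untouched; hence adding $H_n$ increases the region count by \emph{at most} the number of induced cells, and by exactly $C(n-1,d-1)$ when the whole configuration is generic. This yields $C(n,d)=C(n-1,d)+C(n-1,d-1)$. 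The hard part will be precisely this bookkeeping—verifying that every newly created region corresponds to a distinct cell of the induced arrangement on $H_n$, that no old region is cut more than once by the single hyperplane $H_n$, and that the bound is met in general position—so that the recursion is established both as an upper bound and as an achievable value for the maximum $C(n,d)$.

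Finally I would solve the recursion. Since the recursion together with the two boundary cases determines $C$ uniquely, it suffices to check that the closed form $2\sum_{i=0}^{d-1}\binom{n-1}{i}$ satisfies both. The boundary checks are immediate: at $d=1$ the sum is $2\binom{n-1}{0}=2$, and for $d\ge n$ it runs over all of $\{0,\ldots,n-1\}$ and equals $2\cdot 2^{n-1}=2^n$. For the recursion I would apply Pascal's identity $\binom{n-1}{i}=\binom{n-2}{i}+\binom{n-2}{i-1}$ term by term and reindex the shifted sum, giving $2\sum_{i=0}^{d-1}\binom{n-1}{i}=2\sum_{i=0}^{d-1}\binom{n-2}{i}+2\sum_{i=0}^{d-2}\binom{n-2}{i}=C(n-1,d)+C(n-1,d-1)$. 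This completes the induction and establishes the formula.
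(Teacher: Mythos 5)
Your proposal is correct, but it takes a genuinely different route from the paper. You \emph{dualize}: identifying the orthant interiors met by $\range(M)$ with the regions of the central arrangement of the $n$ hyperplanes $H_i=\{u\in\Real^d \mid \iprod{M\row{i}}{u}=0\}$, you reduce the theorem to the classical count of regions of $n$ central hyperplanes in $\Real^d$, and you get the recursion $C(n,d)=C(n-1,d)+C(n-1,d-1)$ from the standard deletion/sweep argument: the cells induced on the added hyperplane $H_n$ are exactly the sets $R\cap H_n$ over old regions $R$ (convexity makes $R\cap H_n$ connected and ensures each cut region splits into exactly two pieces), so the count grows by the number of induced cells, at most $C(n-1,d-1)$ and exactly that generically. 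The bookkeeping you defer is standard and is precisely the convexity argument just sketched, so there is no real gap. The paper instead stays in the primal space $\Real^n$: it picks a basis of an optimal subspace with $d-1$ of the vectors lying in $\{x \mid x_n=0\}$, bounds the count by $C_T+C(n-1,d)\le C(n-1,d-1)+C(n-1,d)$ via a perturbation of the last basis vector, and then---this is the step your approach avoids entirely---proves attainability by a separate duality argument, choosing in each orthant missed by an optimal $(n-d)$-dimensional subspace $U$ a vector orthogonal to $U$, which yields $C(n,d)\ge 2^n-C(n,n-d)$ and closes the chain of inequalities. Your single induction handles upper bound and attainability together and leans on a well-understood classical fact; the paper's route is less standard, but it produces the complementary identity $C(n,d)=2^n-C(n,n-d)$ and the optimality condition $e_i\notin\range(V)$ as byproducts, which it records as corollaries and uses later in its discussion of \rembo. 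Both proofs finish identically, solving the recursion with Pascal's identity and the boundary cases $C(n,1)=2$, $C(n,d)=2^n$ for $d\ge n$.
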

\begin{proof}
The number of intersected orthants is exactly equal to the number of
proper sign patterns (excluding zero values) that can be generated by
linear combinations of those $d$ vectors. When $d=1$, there can only be
two such sign patterns corresponding to positive and negative
multiples of that vector, thus giving $C(n,1) = 2$. Whenever $d \geq
n$, we can choose a basis for $\Real^n$ and add additional vectors
as needed, and we can reach all points, and therefore all $2^n =
C(n,d)$ sign patterns.

For the general case~\eqref{Eq:Cnd}, let $v_1,\ldots,v_d$ be vectors
in $\Real^n$ such that the affine hull with the origin, $S =
\mathrm{aff}\{0,v_1,\ldots,v_d\}$, gives a hyperplane in
$\Real^n$ that properly intersects the maximum number of
orthants, $C(n,d)$. Without loss of generality assume that vectors
$v_i$, $i=1,\ldots,d-1$ all have their $n$th component equal to
zero. Now, let $T = \mathrm{aff}\{0,v_1,\ldots,v_{d-1}\} \subseteq
\Real^{n-1}$ be the intersection of $S$ with the
$(n-1)$-dimensional subspace of all points $\mathcal{X} =
\{x\in\Real^n \mid x_n = 0\}$, and let $C_T$ denote the number of
$(n-1)$-orthants intersected by $T$. Note that $T$ itself, as embedded
in $\Real^n$, does not properly intersect any orthant. However,
by adding or subtracting an arbitrarily small amount of $v_d$, we
intersect $2C_T$ orthants; taking $v_d$ to be the $n$th column of the
identity matrix would suffice for that matter. Any other orthants that
are added have either $x_n > 0$ or $x_n < 0$, and their number does
not depend on the magnitude of the $n$th entry of $v_d$, provided it
remains nonzero. Because only the first $n-1$ entries of $v_d$
determine the maximum number of additional orthants, the problem
reduces to $\Real^{n-1}$. In fact, we ask how many new orthants
can be added to $C_T$ taking the affine hull of $T$ with $v$, the
orthogonal projection $v_d$ onto $\mathcal{X}$. Since the maximum
orthants for this $d$-dimensional subspace in $\Real^{n-1}$ is
given by $C(n-1,d)$, this number is clearly bounded by
$C(n-1,d)-C_T$. Adding this to $2C_T$, we have
\begin{equation}\label{Eq:Cupper}
\begin{aligned}
C(n,d) & \leq 2 C_T + [C(n-1,d) - C_T] = C_T + C(n-1,d)
\\     & \leq C(n-1,d-1) + C(n-1,d) 
\\     & \leq 2\sum_{i=0}^{d-1}{n-1 \choose i}.
\end{aligned}
\end{equation}
The final expression follows by expanding the recurrence relations,
which generates (a part of) Pascal's triangle, and combining this with
$C(1,j) = 2$ for $j\geq 1$. In the above, whenever there are free
orthants in $\Real^{n-1}$, that is, when $d < n$, we can always
choose the corresponding part of $v_d$ in that orthant. As a
consequence we have that no hyperplane supported by a set of vectors
can intersect the maximum number of orthants when the range of those
vectors includes some $e_i$.

We now show that this expression holds with equality. Let $U$ denote
an $(n-d)$-hyperplane in $\Real^n$ that intersects the maximum
$C(n,n-d)$ orthants. We now claim that in the interior of each orthant
not intersected by $U$ there exists a vector that is orthogonal to
$U$. If this were not the case then $T$ must be aligned with some
$e_i$ and can therefore not be optimal. The span of these orthogonal
vectors generates a $d$-hyperplane $V$ that intersects $C_V = 2^n -
C(n,n-d)$ orthants, and it follows that
\begin{align*}
C(n,d)
& \geq C_V = 2^n - C(n,n-d) \\
& \geq 2^n - 2\sum_{i=0}^{n-d-1}{n-1 \choose i}
  = 2\sum_{i=0}^{n-1}{n-1 \choose i} - 2\sum_{i=0}^{n-d-1}{n-1 \choose i} \\
& = 2\sum_{n-d}^{n-1}{n-1 \choose i} = 2\sum_{i=0}^{d-1}{n-1 \choose i} 
  \geq C(n,d),
\end{align*}
where the last inequality follows from \eqref{Eq:Cupper}.
Consequently, all inequalities hold with equality.
\end{proof}

\begin{corollary}
  Given $d \leq n$, then $C(n,d) = 2^n - C(n,n-d)$, and $C(2d,d) = 2^{2d-1}$.
\end{corollary}

%\noindent Interestingly, for $d < n$ we have $C(n-1,d)/2 =
%\sum_{i=0}^{d-1}{n \choose i}$, which is exactly the dimension of
%$\mathcal{R}(n,d-1)$, the binary $n$th order Reed-Muller code (see
%\cite[\S 13.3]{MAC1977Sa} and \cite[sequence A008949]{OEIS}).

\begin{corollary}
  A hyperplane $\mathcal{H}$ in $\Real^n$, defined as the
  range of $V = [v_1,\ v_2, \ldots,\ v_d]$, intersects the maximum number
  of orthants $C(n,d)$ whenever $\rank(V) = n$, or when $e_i
  \not\in\range(V)$ for $i=1,\ldots,n$.
\end{corollary}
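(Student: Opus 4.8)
The plan is to dispatch the two sufficient conditions separately, with the first immediate and the second resting on the duality $C(n,d)=2^n-C(n,n-d)$ from the preceding corollary. If $\rank(V)=n$, then necessarily $d\ge n$, so $\range(V)=\Real^n$ meets every one of the $2^n$ open orthants; since the theorem gives $C(n,d)=2^n$ for $d\ge n$, the maximum is attained trivially. All the real work lies in the case $\rank(V)<n$ with $e_i\notin\range(V)$ for every $i$.

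For that case I would count the orthants that $\range(V)$ \emph{fails} to meet and show this number equals $C(n,n-d)$; the identity $C(n,d)=2^n-C(n,n-d)$ then yields the claim. The tool is a theorem of the alternative (Gordan/Stiemke): writing $W=\range(V)$, the open orthant with sign pattern $\sigma\in\{+,-\}^n$ is \emph{not} met by $W$ if and only if there is a nonzero $z\in W^\perp=\range(V)^\perp$ in the closed orthant $\{z : \sigma_i z_i\ge 0 \text{ for all } i\}$. This already exhibits each missed orthant of $W$ through a certificate living in the complementary subspace $\range(V)^\perp$, whose dimension is $n-d$. Under general position one then argues that the missed orthants of $W$ are in bijection with the orthants genuinely met by $\range(V)^\perp$, so their number is $C(n,n-d)$ and $W$ meets exactly $2^n-C(n,n-d)=C(n,d)$ orthants.

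The main obstacle is upgrading this correspondence from closed to open orthants: I must rule out certificates $z$ with some $z_i=0$, since only strict certificates correspond to orthants actually met by $\range(V)^\perp$ and give the clean bijection. The degenerate count can genuinely exceed $C(n,n-d)$ when a boundary certificate exists (a subspace lying in a coordinate hyperplane misses far more orthants than a generic one), so this step is where the argument can break. The extreme obstruction $z=e_i$ occurs exactly when row $i$ of $V$ vanishes, i.e.\ when $e_i\in\range(V)^\perp$, which is the degeneracy dual to the excluded $e_i\in\range(V)$. Invoking the theorem's own remark---that a supporting subspace cannot be optimal once its range contains some $e_i$---applied to $\range(V)^\perp$, I would conclude that in general position the certificates may be taken strict, closing the bijection.

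Finally I would record the symmetry that drives the argument: $e_i\notin\range(V)$ prevents $W$ from containing a coordinate axis, while the dual requirement $e_i\notin\range(V)^\perp$ prevents $W$ from lying in a coordinate hyperplane, and both types of non-degeneracy are what the bijection consumes. I expect this general-position bookkeeping---verifying that no intermediate coordinate-subspace degeneracy intrudes and that the closed certificates reduce to strict ones---to be the crux of a fully rigorous write-up, rather than the elementary algebra of $C(n,d)$.
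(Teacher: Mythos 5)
Your treatment of the first case ($\rank(V)=n$ forces $d\ge n$, so $\range(V)=\Real^n$ meets all $2^n=C(n,d)$ orthants) is correct, and the Stiemke-certificate framework for the second case is the natural attack. But the gap you flag---showing every missed orthant admits a \emph{strict} certificate---cannot be closed, because the statement you are trying to prove (sufficiency, which is how the corollary is literally worded and how you read it) is false. Take $n=4$, $d=2$, $V=[e_1+e_2,\ e_3+e_4]$, so $\range(V)=\{(a,a,b,b)^T : a,b\in\Real\}$. Then $e_i\notin\range(V)$ for every $i$, and even your dual non-degeneracy condition holds ($\range(V)^\perp=\{(c,-c,d,-d)^T\}$ contains no $e_i$ either), yet $\range(V)$ meets only the four orthants with sign patterns $(\sigma,\sigma,\tau,\tau)$, while $C(4,2)=8$. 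The failure is exactly at your crux: $(1,-1,0,0)\in\range(V)^\perp$ is a boundary certificate, and it certifies missed orthants such as $(+,-,+,+)$ that admit no strict certificate, so the number of missed orthants ($12$) strictly exceeds the number of orthants met by $\range(V)^\perp$ ($4$) and your bijection breaks. Nor is this a removable technicality: whenever $n-d\ge 2$, $\range(V)^\perp$ intersects every coordinate hyperplane in dimension at least $n-d-1\ge 1$, so certificates with zero entries always exist. What your argument actually needs is that every missed orthant \emph{also} have a strict certificate, which amounts to $\range(V)$ being transversal to every coordinate subspace (the induced arrangement of the $n$ coordinate hyperplanes inside $\range(V)$ being in general position); that hypothesis is genuinely stronger than avoiding the $n$ coordinate axes and cannot be extracted from $e_i\notin\range(V)$, with or without the dual condition.

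For comparison, the paper gives no proof of this corollary at all, and the only content its theorem actually supports is the \emph{necessity} direction: the closing remark of the theorem's proof shows that if the maximum $C(n,d)$ is attained and there are free orthants (i.e., $\rank(V)<n$), then no $e_i$ can lie in $\range(V)$; combined with the trivial attainment when $\rank(V)=n$, this yields ``attainment implies ($\rank(V)=n$ or $e_i\notin\range(V)$ for all $i$),'' not the converse. So the right response here is not a completed proof but a counterexample together with a corrected statement---either the necessity version above, or your sufficiency version under the strengthened transversality/genericity hypothesis, under which your certificate bijection does go through.
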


\subsection{Practical considerations}\label{Sec:ReMBoLtd}

In practice it is generally not feasible to generate all of the
$C(\card{\mathcal{I}},r)/2$ unique sign patterns. This means that we
would have to replace this term in~\eqref{Eq:PReMBo} by the number of
unique patterns actually tried.  For a given $X_0$ the actual
probability of recovery is determined by a number of factors. First of
all, the linear combinations of the columns of the nonzero part of
$\bar{X}$ prescribe a hyperplane and therefore a set of possible sign
patterns. With each sign pattern is associated a face in $\Cscr$
that may or may not map to a face in $A\Cscr$. In addition,
depending on the probability distribution from which the weight
vectors $w$ are drawn, there is a certain probability for reaching
each sign pattern. Summing the probability of reaching those patterns
that can be recovered gives the probability $P(A,\mathcal{I},X_0)$ of
recovering with an individual random sample $w$. The probability of
recovery after $t$ trials is then of the form
\[
1 - [1 - P(A,\mathcal{I},X_0)]^t.
\]
To attain a certain sign pattern $\ebar$, we need to find an
$r$-vector $w$ such that $\sign(\bar{X}w) = \ebar$. For a positive
sign on the $j$th position of the support we can take any vector $w$
in the open halfspace $\{w \mid \bar{X}\row{j}w > 0\}$, and likewise
for negative signs. The region of vectors $w$ in $\Real^r$ that
generates a desired sign pattern thus corresponds to the intersection
of $\abs{\Iscr}$ open halfspaces. The measure of this intersection as
a fraction of $\Real^r$ determines the probability of sampling such a
$w$.  To formalize, define $\mathcal{K}$ as the cone generated by the
rows of $-\diag(\ebar)\bar{X}$, and the unit Euclidean $(k-1)$-sphere
$\mathcal{S}^{k-1} = \{x\in\Real^r \mid \norm{x}_2 = 1\}$. The
intersection of halfspaces then corresponds to the interior of the
polar cone of $\mathcal{K}$: $\mathcal{K}\polar = \{x\in\Real^r \mid
x\T y \leq 0,\ \forall y\in\mathcal{K}\}$. The fraction of $\Real^r$
taken up by $\mathcal{K}\polar$ is given by the $(k-1)$-content of
$\mathcal{S}^{k-1} \cap \mathcal{K}\polar$ to the $(k-1)$-content of
$\mathcal{S}^{k-1}$~\cite{GRU2003a}. This quantity coincides precisely
with the definition of the external angle of $\mathcal{K}$ at the
origin.
% For $s \leq 3$, the external angles can be easily computed based on
% the entries of the Gram matrix $\bar{X}\bar{X}^T$.

\subsection{Experiments}\label{Sec:ReMBoExperiments}

In this section we illustrate the theoretical results from
Section~\ref{Sec:ReMBo} and examine some practical considerations that
affect the performance of \rembo. For all experiments that require the
matrix $A$, we use the same $20\times 80$ matrix that was used in
Section~\ref{Sec:BoostedL1}, and likewise for the supports
$\mathcal{I}_s$.  To solve~\eqref{Eq:L1}, we again use CVX in
conjunction with SDPT3. We consider $x_0$ to be recovered from $b =
Ax_0 = AX_0w$ if $\norm{x^*-x_0}_\infty\le 10^{-5}$, where $x^*$ is the computed
solution.

The experiments that are concerned with the number of unique sign
patterns generated depend only on the $s\times r$ matrix $\bar{X}$
representing the nonzero entries of $X_0$. Because an initial
reordering of the rows does not affect the number of patterns, those
experiments depend only on $\bar{X}$, $s = \card{\mathcal{I}}$, and
the number of observations $r$; the exact indices in the support
set $\mathcal{I}$ are irrelevant for those tests.

% For problems of the size we analyze here, interior point methods are
% the method of choice and we shall use SDPT3 through the CVX
% interface. We declare recovery when the maximum absolute difference
% between the entries in $X_0$ and $X^*$ is less than $10^{-5}$.

\subsubsection{Generation of unique sign patterns}

The practical performance of \rembo{} depends on its ability to
generate as many different sign patterns using the columns in $X_0$ as
possible. A natural question to ask then is how the number of such
patterns grows with the number of randomly drawn samples $w$. Although
this ultimately depends on the distribution used for generating the
entries in $w$, we shall, for sake of simplicity, consider only
samples drawn from the normal distribution. As an experiment we take a
$10\times 5$ matrix $\bar{X}$ with normally-distributed entries, and
over $10^8$ trials record how often each sign-pattern (or negation)
was reached, and in which trial they were first encountered. The
results of this experiment are summarized in
Figure~\ref{Fig:ApproachC10_5}. From the distribution in
Figure~\ref{Fig:ApproachC10_5}(b) it is clear that the occurrence
levels of different orthants exhibits a strong bias.
%, likely centered around the sign patterns of the columns of
%$\bar{X}$ and their negation.
The most frequently visited orthant pairs were reached up to
$7.3\times 10^6$ times, while others, those hard to reach using
weights from the normal distribution, were observed only four times
over all trials. The efficiency of \rembo{} depends on the rate of
encountering new sign patterns.  Figure~\ref{Fig:ApproachC10_5}(c) shows
how the average rate changes over the number of trials. The curves in
Figure~\ref{Fig:ApproachC10_5}(d) illustrate the theoretical probability
of recovery in~\eqref{Eq:PReMBo}, with $C(n,d)/2$ replaced by the
number of orthant pairs at a given iteration, and with face counts
determined as in Section~\ref{Sec:BoostedL1}, for three instances with
support cardinality $s=10$, and observations $r=5$.

\begin{figure}
\centering
\begin{tabular}{cc}
\includegraphics[width=0.45\textwidth]{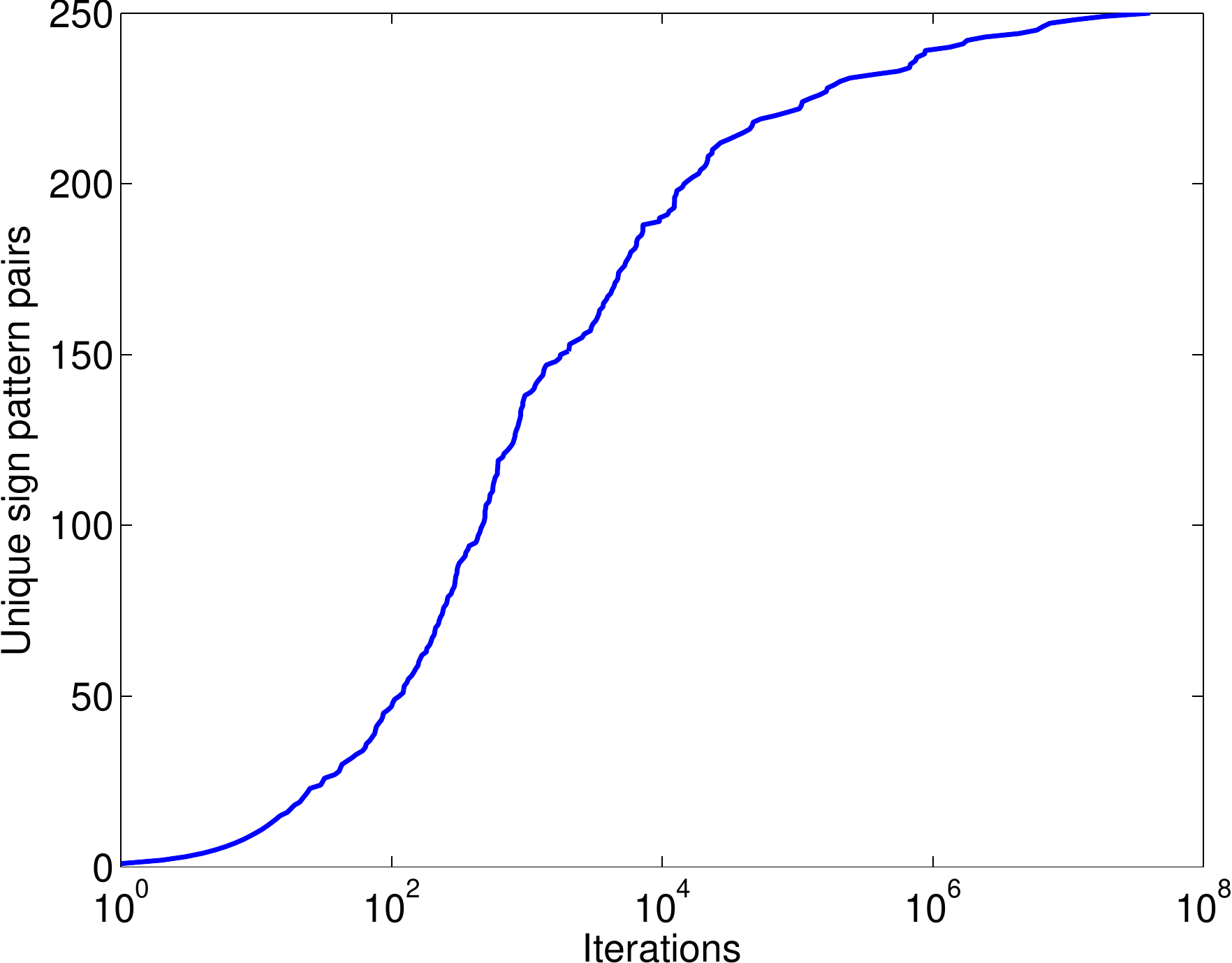} &
\includegraphics[width=0.45\textwidth]{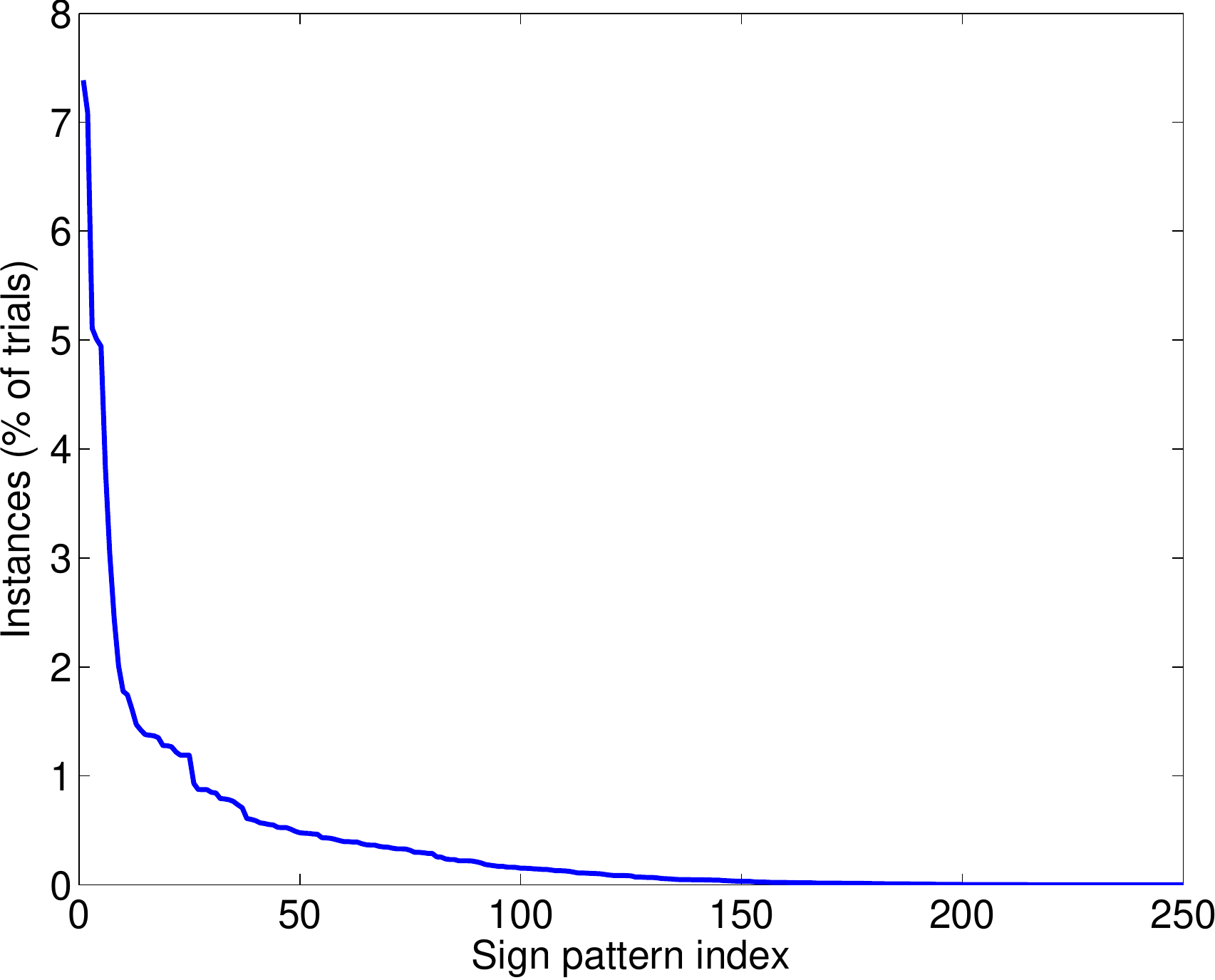}
\\ (a) & (b) \\[6pt]
\includegraphics[width=0.45\textwidth]{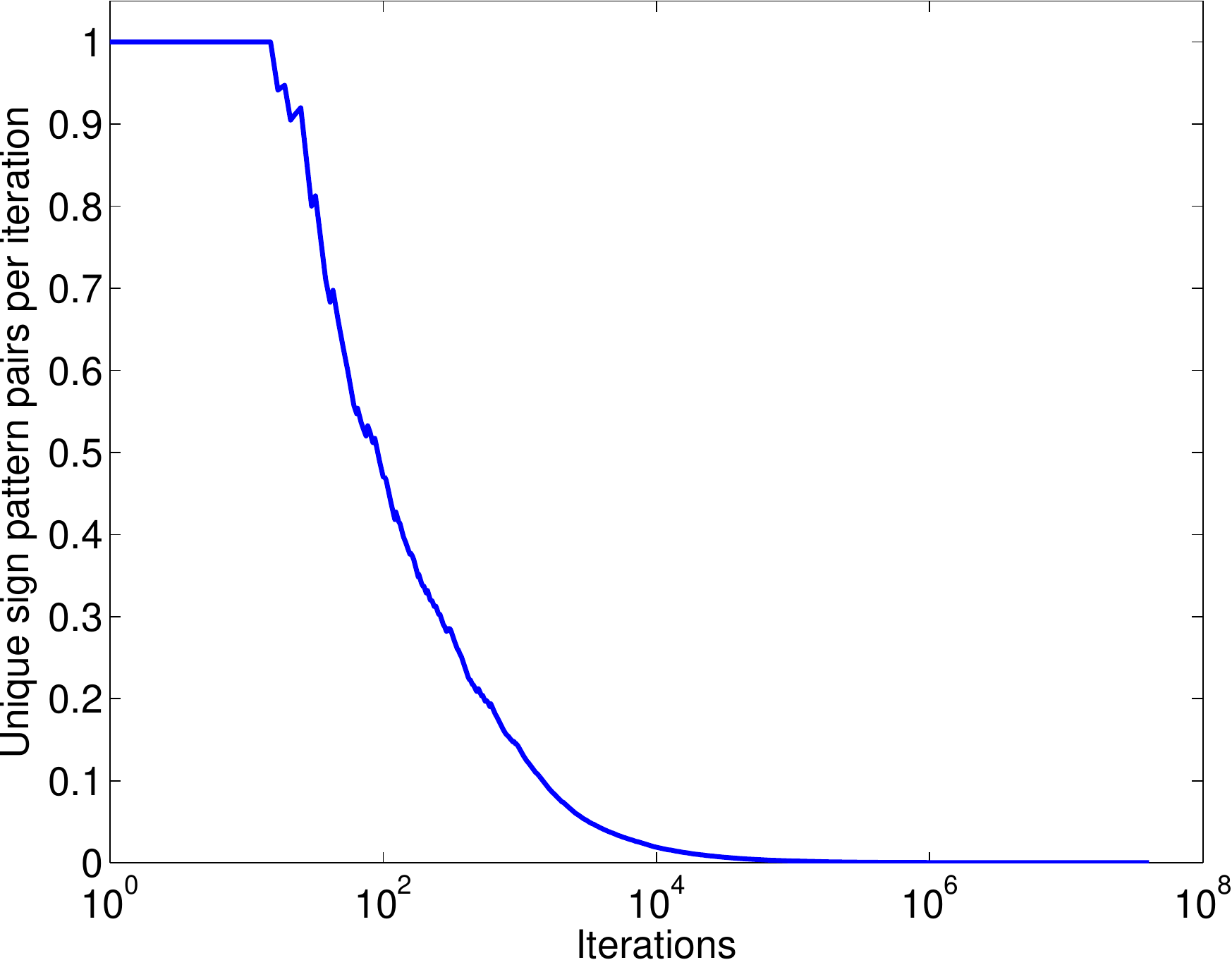} &
\includegraphics[width=0.45\textwidth]{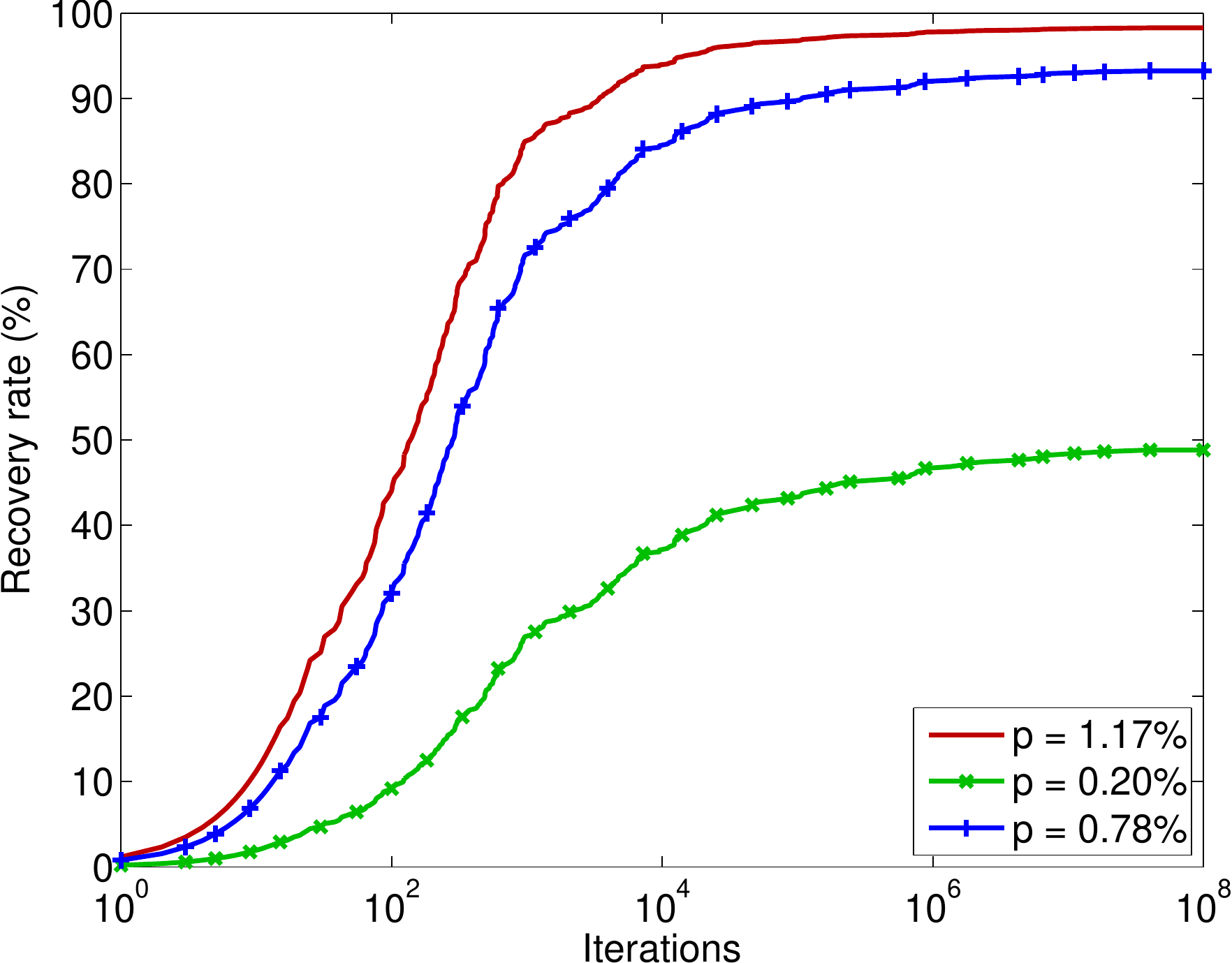} \\
(c) & (d)
\end{tabular}
\caption{Sampling the sign patterns for a $10\times 5$ matrix
  $\bar{X}$, with (a) number of unique sign patterns versus number of
  trials, (b) relative frequency with which each orthant is sampled,
  (c) average number of new sign patterns per iteration as a function
  of iterations, and (d) theoretical probability of recovery using
  \rembo{} for three instances of $X_0$ with row sparsity $s = 10$, and
  $r=5$ observations.}
\label{Fig:ApproachC10_5}
\end{figure}

\subsubsection{Role of $\bar{X}$.}
Although the number of orthants that a hyperplane can intersect does
not depend on the basis with which it was generated, this choice does
greatly influence the ability to sample those
orthants. Figure~\ref{Fig:MatrixInfluenceApproachCnd} shows two ways
in which this can happen. In part (a) we sampled the number of unique
sign patterns for two different $9\times 5$ matrices $\bar{X}$, each
with columns scaled to unit $\ell_2$-norm. The entries of the first
matrix were independently drawn from the normal distribution, while
those in the second were generated by repeating a single column drawn
likewise and adding small random perturbations to each entry. This
caused the average angle between any pair of columns to decrease from
$65$ degrees in the random matrix to a mere $8$ in the perturbed
matrix, and greatly reduces the probability of reaching certain
orthants. The same idea applies to the case where $d \geq n$, as shown
in part (b) of the same figure. Although choosing $d$ greater than $n$
does not increase the number of orthants that can be reached, it does
make reaching them easier, thus allowing \rembo{} to work more
efficiently. Hence, we can expect \rembo{} to have higher recovery on
average when the number of columns in $X_0$ increases and when they
have a lower mutual coherence $\mu(X) = \min_{i\neq j}\abs{x_i^T
  x_j}/(\norm{x_i}_2\cdot\norm{x_j}_2)$.

\begin{figure}
\centering
\begin{tabular}{cc}
\includegraphics[width=0.45\textwidth]{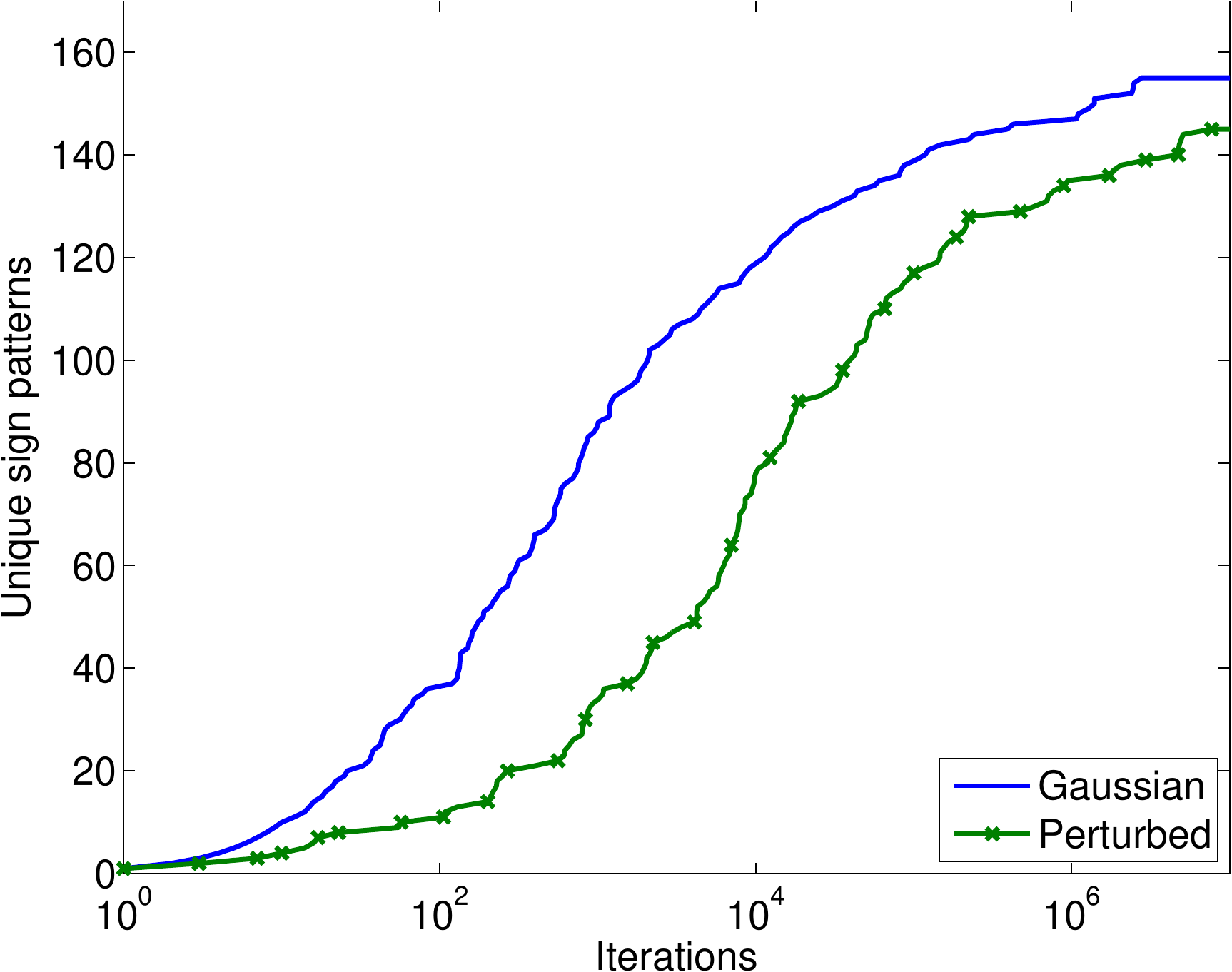} &
\includegraphics[width=0.45\textwidth]{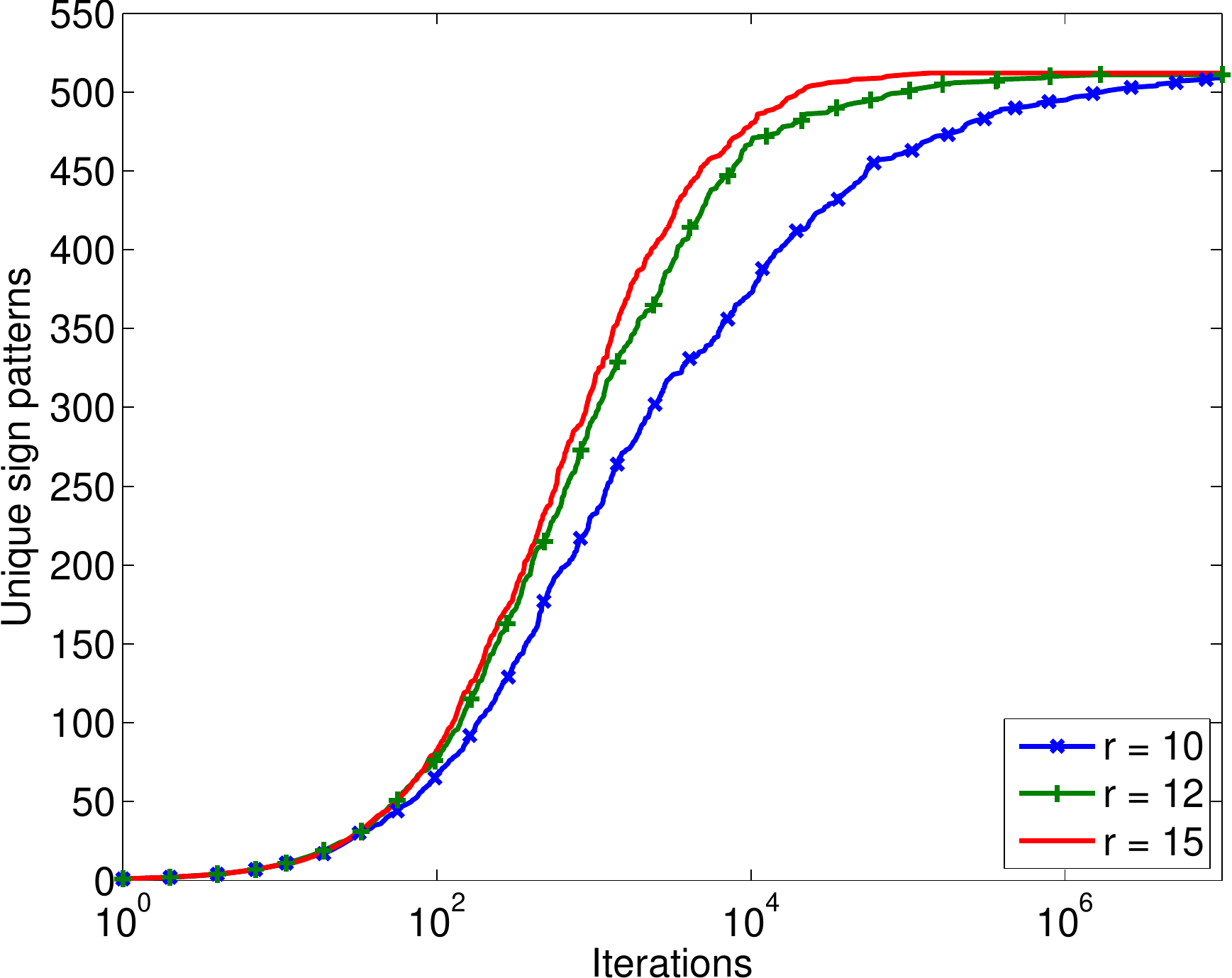} \\
(a) & (b)
\end{tabular}
\caption{Number of unique sign patterns for (a) two $9\times 5$
  matrices $\bar{X}$ with columns scaled to unit $\ell_2$-norm; one
  with entries drawn independently from the normal distribution, and
  one with a single random column repeated and random perturbations
  added, and (b) $10\times r$ matrices with $r=10,12,15$.}
\label{Fig:MatrixInfluenceApproachCnd}
\end{figure}

\subsubsection{Limiting the number of iterations}

The number of iterations used in the previous experiments greatly
exceeds that what is practically feasible: we cannot afford to run
\rembo{} until all possible sign patterns have been tried, even if there
was a way detect that the limit had been reached. Realistically, we
should set the number of iterations to a fixed maximum that depends
on the computational resources available, and the problem setting.

In Figure~\ref{Fig:ApproachC10_5} we show the unique orthant count as
a function of iterations and the predicted recovery rate. When using
only a limited number of iterations it is interesting to know what the
distribution of unique orthant counts looks like. To find out, we drew
1,000 random $\bar{X}$ matrices for each size $s\times r$, with $s=10$
nonzero rows fixed, and the number of columns ranging from
$r=1,\ldots,20$. For each $\bar{X}$ we counted the number of unique
sign patterns attained after respectively 1,000 and 10,000
iterations. The resulting minimum, maximum, and median values are
plotted in Figure~\ref{Fig:ReMBoPerformance}(a) along with the
theoretical maximum. More interestingly of course is the average
recovery rate of \rembo{} with those number of iterations. For this test
we again used the $20\times 80$ matrix $A$ with predetermined support
$\mathcal{I}$, and with success or failure of each sign pattern on
that support precomputed. For each value of $r=1,\ldots,20$ we
generated random matrices $X$ on $\mathcal{I}$ and ran \rembo{} with the
maximum number of iterations set to 1,000 and 10,000. To save on
computing time, we compared the on-support sign pattern of each combined
coefficient vector $Xw$ to the known results instead of solving
$\ell_1$. The average recovery rate thus obtained is plotted in
Figures~\ref{Fig:ReMBoPerformance}(b)--(c), along with the average of the
predicted performance using~\eqref{Eq:PReMBo} with $C(n,d)/2$ replaced
by orthant counts found in the previous experiment.

\begin{figure}
\centering
\begin{tabular}{ccc}
\includegraphics[width=0.32\textwidth]{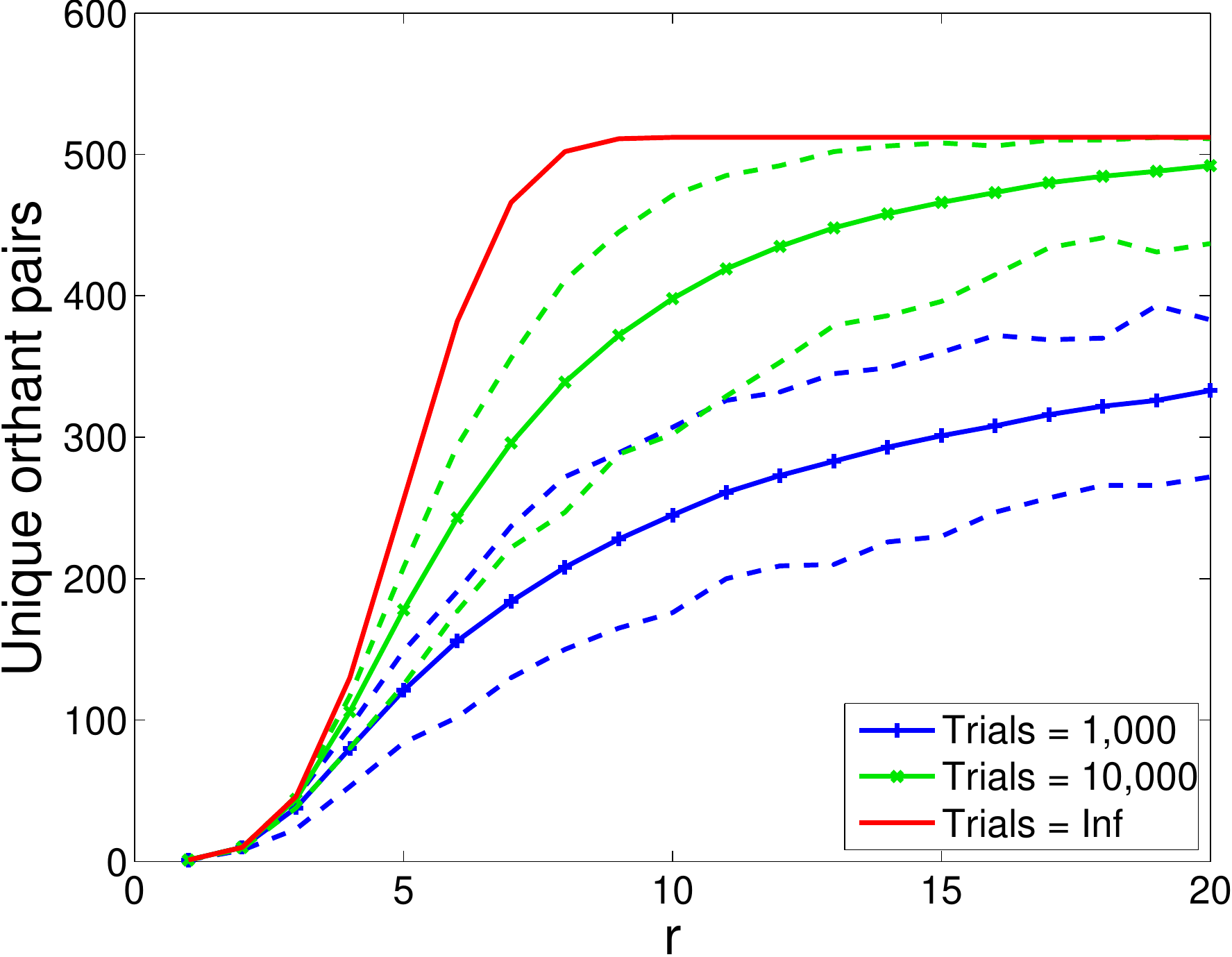} &
\includegraphics[width=0.32\textwidth]{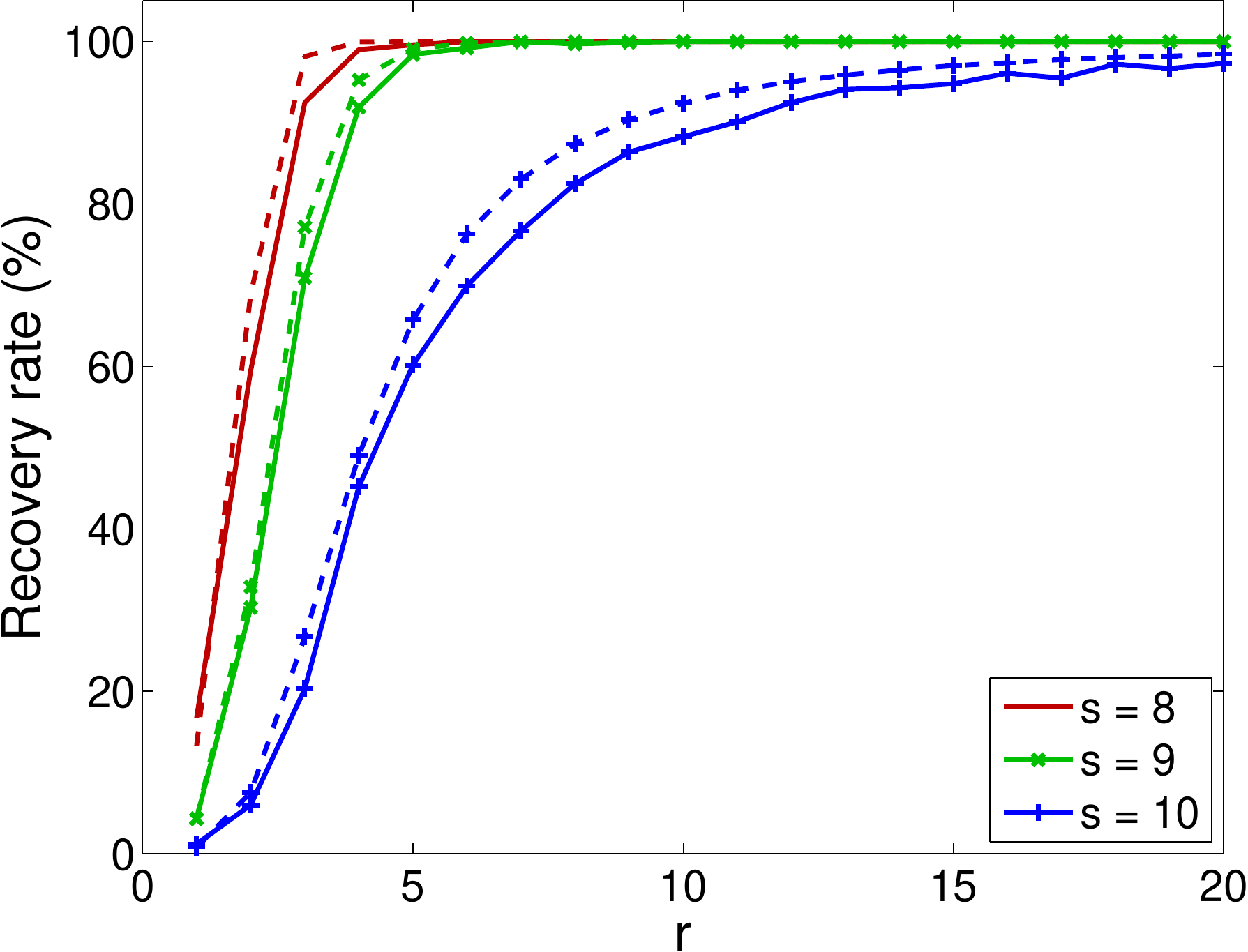} &
\includegraphics[width=0.32\textwidth]{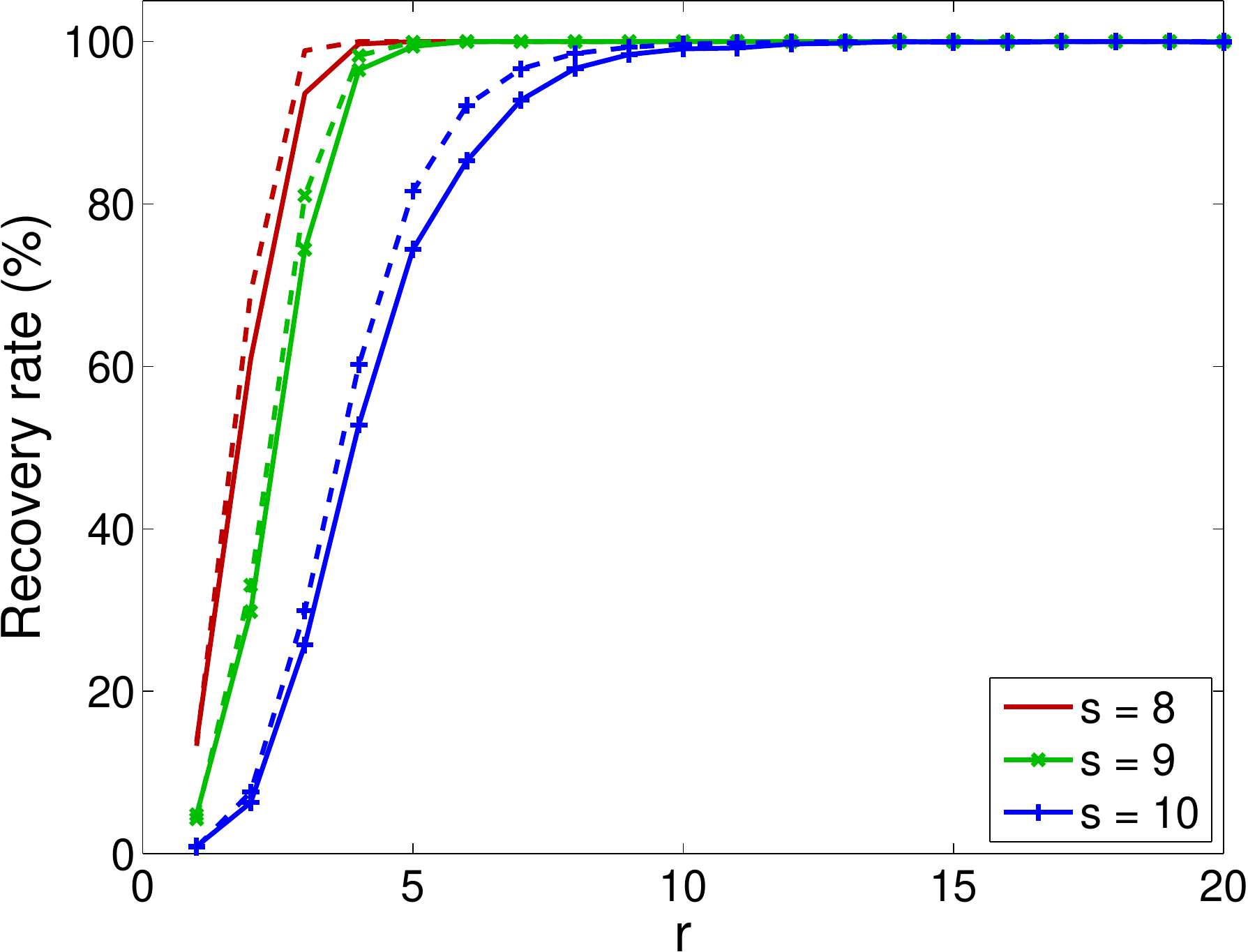}
\\
(a) & (b) & (c)
\end{tabular}
\caption{Effect of limiting the number of weight vectors $w$ on (a)
  the distribution of unique orthant counts for $10\times k$ random
  matrices  $\bar{X}$, solid lines give the median number and the
  dashed lines indicate the minimum and maximum values, the top solid
  line is the theoretical maximum; (b--c) the average performance of the
  \rembo-$\ell_1$ algorithm (solid) for fixed $20\times 80$ matrix $A$
  and three different support sizes $r=8,9,10$, along with the average
  predicted performance (dashed). The support patterns used are the
  same as those used for Figure \ref{Fig:BoostedL1}.}\label{Fig:ReMBoPerformance}
\end{figure}

\section{Conclusions}

The MMV problem is often solved by minimizing the sum-of-row norms of
the unknown coefficients $X$. We show that the (local) uniform
recovery properties, i.e., recovery of all $X_0$ with a fixed row
support $\mathcal{I} = \support_{\mathrm{row}}(X_0)$, cannot exceed
that of $\ell_{1,1}$, the sum of $\ell_1$ norms. This is despite the
fact that $\ell_{1,1}$ reduces to solving the basis pursuit problem
\eqref{Eq:L1} for each column separately, which does not take
advantage of the fact that all vectors in $X_0$ are assumed to have
the same support. A consequence of this observation is that the use of
restricted isometry techniques to analyze (local) uniform recovery
using sum-of-norm minimization can at best give improved bounds on
$\ell_1$ recovery.

Empirically, minimization with $\ell_{1,2}$, the sum of $\ell_2$
norms, clearly outperforms $\ell_{1,1}$ on individual problem
instances: for supports where uniform recovery fails, $\ell_{1,2}$
recovers more cases than $\ell_{1,1}$. We construct cases where
$\ell_{1,2}$ succeeds while $\ell_{1,1}$ fails, and vice versa. From
the construction where only $\ell_{1,2}$ succeeds it also follows that
the relative magnitudes of the coefficients in $X_0$ matter for
recovery. This is unlike $\ell_{1,1}$ recovery, where only the support
and the sign patterns matter. This implies that the notion of faces,
so useful in the analysis of $\ell_1$, disappears.

We show that the performance of $\ell_{1,1}$ outside the
uniform-recovery regime degrades rapidly as the number of observations
increases. We can turn this situation around, and increase the
performance with the number of observations by using a
boosted-$\ell_1$ approach. This technique aims to uncover the correct
support based on basis pursuit solutions for individual observations.
Boosted-$\ell_1$ is a special case of the \rembo{} algorithm which
repeatedly takes random combinations of the observations, allowing it
to sample many more sign patterns in the coefficient space. As a
result, the potential recovery rates of \rembo{} (at least in
combination with an $\ell_1$ solver) are a much higher than
boosted-$\ell_1$. \rembo{} can be used in combination with any solver
for the single measurement problem $Ax=b$, including greedy approaches
and reweighted $\ell_1$ \cite{CAN2008WBa}. The recovery rate of greedy
approaches may be lower than $\ell_1$ but the algorithms are generally
much faster, thus giving \rembo{} the chance to sample more random
combinations. Another advantage of \rembo, even more so than
boosted-$\ell_1$, is that it can be easily parallelized.

Based on the geometrical interpretation of \rembo-$\ell_1$ (cf.\@
Figure~\ref{Alg:ReMBo}), we conclude that, theoretically, its
performance does not increase with the number of observations after
this number reaches the number of nonzero rows.  In addition we
develop a simplified model for the performance of \rembo-$\ell_1$.  To
improve the model we would need to know the distribution of faces in
the cross-polytope $\Cscr$ that map to faces on $A\Cscr$, and the
distribution of external angles for the cones generated by the signed
rows of the nonzero part of $X_0$.

It would be very interesting to compare the recovery performance
between $\ell_{1,2}$ and \rembo-$\ell_1$. However, we consider this
beyond the scope of this paper.

All of the numerical experiments in this paper are reproducible. The
scripts used to run the experiments and generate the figures can be
downloaded from
\begin{center}\url{http://www.cs.ubc.ca/~mpf/jointsparse}.\end{center}

\section*{Acknowledgments}

The authors would like to give their sincere thanks to \"{O}zg\"{u}r
Y{\i}lmaz and Rayan Saab for their thoughtful comments and suggestions
during numerous discussions.

% ===========================================================================
\bibliography{../bibliography/articles}
% ===========================================================================

\end{document}